\newlength{\halfpagewidth}
\newtheorem{theorem}{\textbf{Theorem}}
\newtheorem{lemma}{\textbf{Lemma}}
\newtheorem{corollary}{\textbf{Corollary}}
\newtheorem{proof}{\textbf{Proof}}
\newtheorem{definition}{\textbf{Definition}}
\newtheorem{proposition}{\text{Proposition}}
\def\ScaleIfNeeded{%
\ifdim\Gin@nat@width>\linewidth \linewidth \else \Gin@nat@width
\fi } \makeatother
\begin{document}
%

\title{Edge Caching in Dense Heterogeneous Cellular Networks with Massive MIMO Aided Self-backhaul}

\author{Lifeng Wang,~\IEEEmembership{Member,~IEEE,}  Kai-Kit Wong,~\IEEEmembership{Fellow,~IEEE,}  Sangarapillai Lambotharan,~\IEEEmembership{Senior Member,~IEEE,} Arumugam Nallanathan,~\IEEEmembership{Fellow,~IEEE,} and Maged Elkashlan,~\IEEEmembership{Member,~IEEE}
\thanks{This work was supported by the U.K. Engineering and Physical Sciences Research Council (EPSRC) under Grant EP/M016005/1, EP/M016145/2, EP/M015475/1.}
\thanks{L. Wang and K.-K. Wong are with the Department of Electronic and
Electrical Engineering, University College London, London, UK (Email: \{lifeng.wang, kai-kit.wong\}@ucl.ac.uk).}
\thanks{S. Lambotharan is with School of Mechanical, Electrical and Manufacturing Engineering, Loughborough University, Loughborough Leicestershire, UK (Email: $\rm\{s.lambotharan\}@lboro.ac.uk$).}
\thanks{A. Nallanathan and M. Elkashlan are with the School of Electronic Engineering and Computer Science, Queen Mary University of London, London, UK. (Email: $\rm\{a.nallanathan, maged.elkashlan\}@qmul.ac.uk$).}
}

\maketitle

\begin{abstract}
This paper focuses on edge caching in dense heterogeneous cellular networks (HetNets), in which small base stations (SBSs) with limited cache size store the popular contents, and massive multiple-input multiple-output (MIMO) aided macro base stations  provide wireless self-backhaul when SBSs require the non-cached contents. Our aim is to address the effects of cell load and hit probability on the successful content delivery (SCD), and present the minimum required base station density for avoiding the access overload in an arbitrary  small cell and backhaul overload in an arbitrary macrocell. The  achievable rate of massive MIMO backhaul without any downlink channel estimation is derived to calculate the backhaul time, and the latency is also evaluated in such networks. The analytical results confirm that hit probability needs to be appropriately selected, in order to achieve SCD. The interplay between cache size and SCD is explicitly quantified.  It is theoretically  demonstrated that when non-cached contents are requested, the average delay of the non-cached content delivery could be comparable to the cached content delivery with the help of massive MIMO aided self-backhaul, if the average access rate of cached content delivery is lower than that of self-backhauled content delivery. Simulation results are presented to validate our analysis.
\end{abstract}

\begin{IEEEkeywords}
Edge caching, dense small cell, massive MIMO, self-backhaul.
\end{IEEEkeywords}

\section{Introduction}

\subsection{Motivation and Background}
New findings from  Cisco~\cite{CIS-VISUAL} indicate that  mobile video traffic accounts for the majority of mobile data traffic. To offload the traffic of the core networks and reduce the backhaul cost and latency, caching the popular contents at the edge of wireless networks becomes a promising solution~\cite{Paschos_mag_2016,A_liu_2016,lifeng_mag_2017}. The latest 3GPP standard requires that the fifth generation (5G) system shall support content caching applications and operators need to place the content caches close to mobile terminals~\cite{Caching_3GPP_2017}. In addition, the emerging radio-access technologies and wireless network architectures  provide edge caching with new opportunities~\cite{D_liu_survey}.

Recent works have focused on the caching design and analysis in various scenarios. In \cite{Blaszczyszyn_2015}, a probabilistic caching model was considered in single-tier cellular networks and the optimal content placement was designed to maximize the total hit probability. In  \cite{BZHO-STO}, a stochastic content multicast scheduling problem was formulated to jointly minimize the average network delay and power costs in heterogeneous cellular networks (HetNets), and a structure-aware optimal algorithm was proposed to solve this problem. Caching cooperation in multi-tier HetNets was studied in \cite{Xiuhua_2017}, where a low-complexity suboptimal solution was developed to maximize the capacity in such networks. Caching in device-to-device (D2D) networks was investigated in the literature such as \cite{M_Ji_JSAC_2016,zheng_chen_March_2017}. In \cite{M_Ji_JSAC_2016},  a holistic design on D2D caching at multi-frequency band including sub-6 GHz and millimeter wave (mmWave) was presented. In \cite{zheng_chen_March_2017}, the performance difference between maximizing  hit probability and  maximizing cache-aided throughput in D2D caching networks was evaluated.
 The work of  \cite{Zheng_Gan_2017}  showed that in multi-hop relaying systems, the efficiency of caching could be further improved by using collaborative cache-enabled relaying. Joint design of cloud and edge caching in fog radio access networks were introduced in  \cite{S_H_Park_2016,haijunzhang_mag_2017}, where the popular contents were cached at the remote radio heads. However,  prior works~\cite{Blaszczyszyn_2015,BZHO-STO,Xiuhua_2017,M_Ji_JSAC_2016,zheng_chen_March_2017,Zheng_Gan_2017,S_H_Park_2016} did not present design and insights involving edge caching in the future dense/ultra-dense cellular networks (e.g., 5G) with backhaul limitations, where wireless self-backhauling shall be supported~\cite{lifeng_mag_2017}.

  Cache-enabled small cell networks with stochastic models have been investigated in the literature such as \cite{ZCH-COO,Youjia_chen_2017,KUI-OPT,J_Wen_2017,shijin_caching_2017}. Cluster-centric caching with base station (BS) cooperation was studied in \cite{ZCH-COO}, where the tradeoff between transmission diversity and content diversity was revealed. In \cite{Youjia_chen_2017} where it was assumed that the intensity of BSs is much larger than the intensity of mobile terminals, two cache-enabled BS modes were considered, namely always-on and dynamic on-off. The work of \cite{KUI-OPT,J_Wen_2017,shijin_caching_2017} concentrated on the cache-enabled multi-tier HetNets. Specifically, \cite{KUI-OPT} and \cite{J_Wen_2017} studied optimal content placement under probabilistic caching strategy, and  \cite{shijin_caching_2017} considered the joint BS caching and cooperation, in contrast to the single-tier case in \cite{ZCH-COO}. However,   \cite{ZCH-COO,Youjia_chen_2017,KUI-OPT,J_Wen_2017,shijin_caching_2017} only aimed to maximize the probability that the requested content is not only cached but also successfully delivered. In realistic networks, when the requested contents of users are not cached at their associated BSs, they will be obtained from the core network via wired/wireless backhaul, which also needs to be studied in cache-enabled cellular networks.

  In fact, existing contributions such as \cite{MTAO-CON,X_Peng,Anliu_IEEE_ACM} have studied the effects of backhaul on content delivery in cache-enabled networks. The work of \cite{MTAO-CON} considered that non-cached contents were obtained via backhaul, and  a downlink content-centric sparse multicast beamforming was proposed for the cache-enabled cloud radio access network (Cloud-RAN), to minimize the weighted sum of backhaul cost and transmit power. In \cite{X_Peng},  the network successful content delivery consisting of cached content delivery and backhauled content delivery was studied. The optimization problem was formulated to minimize the cache size under quality-of-service constraint. The work of \cite{Anliu_IEEE_ACM} analyzed the capacity scaling law when there are limited number of wired backhaul in single-tier networks, and showed that cache size needs to be large enough to achieve linear capacity scaling. However, none of \cite{MTAO-CON,X_Peng,Anliu_IEEE_ACM} has studied the cache-enabled cellular networks with specified wireless backhaul transmission, such as massive multiple-input multiple-output (MIMO) aided self-backhaul.

\subsection{Novelty and Contributions}
In this paper, we focus on the edge caching in dense HetNets with massive MIMO aided self-backhaul, which has not been understood yet. Massive MIMO aided self-backhaul is motivated by the facts that it may not be feasible to have optical fiber for every backhaul channel and massive MIMO can support high-speed transmissions {thanks to large array gains and  multiplexing gains}~\cite{lifeng_mag_2017}. Our contributions are summarized as follows:
\begin{itemize}
  \item  In contrast to the prior works such as \cite{ZCH-COO,Youjia_chen_2017,KUI-OPT,J_Wen_2017,shijin_caching_2017,MTAO-CON,X_Peng,Anliu_IEEE_ACM}, we consider cache-enabled HetNets, in which randomly located small BSs (SBSs) cache finite popular contents, and macro BSs (MBSs) equipped with massive MIMO antennas provide wireless backhaul to deliver the non-cached requested contents to the SBSs. Moreover, we also consider the resource allocation  when multiple users request the contents from the same SBS, which has not been studied in a cache-enabled stochastic model.
  \item We first derive the successful content delivery probability  when the requested content is cached at the SBS. The maximum small cell load is calculated, and the minimum required density of SBSs for avoiding access overload is obtained. We show that hit probability needs to be lower than a critical value, to guarantee successful cached content delivery.

  \item We derive the successful content delivery probability when the requested content is not cached and has to be obtained via massive MIMO backhaul. We analyze the massive MIMO backhaul achievable rate when downlink channel estimation is not {necessary}, to evaluate the backhaul transmission delay. The minimum required density of MBSs for avoiding backhaul overload is obtained. We show that hit probability needs to be higher than a critical value, to guarantee  successful self-backhauled content delivery.

  \item We analyze the effects of cache size on the successful content delivery, and provide important insights on the interplay between time-frequency resource allocation and cache size from the perspective of successful content delivery probability. We characterize the latency in terms of average delay in such networks. We confirm that when the requested contents are not cached, the average delay of the non-cached content delivery could be comparable to the cached content delivery with the assistance of massive MIMO backhaul, if the average access rate of cached content delivery is lower than that of  self-backhauled content delivery.
\end{itemize}

\section{Network Model}
\begin{figure}[t!]
\centering
\includegraphics[width=3.2 in,height=2.8 in]{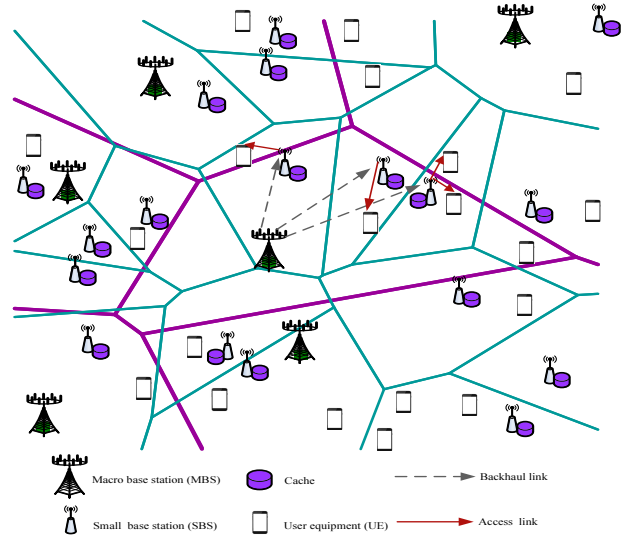}
\caption{An illustration of cache-enabled heterogeneous cellular network with massive MIMO backhaul.}
\label{WPT_UE}
\end{figure}
As shown in Fig. \ref{WPT_UE}, we consider a two-tier self-backhauled HetNet, in which each single-antenna SBS with finite cache size can store popular contents to serve user equipment (UEs). Each massive MIMO aided MBS equipped with $N$ antennas has access to the core network via optical fiber and delivers the non-cached contents to the SBSs via wireless backhaul. UEs, SBSs, and MBSs are assumed to be distributed following independent homogeneous Poisson point processes (HPPPs) denoted by $\Phi_\mathrm{U}$ with the density $\lambda_\mathrm{U}$, $\Phi_\mathrm{S}$ with the density $\lambda_\mathrm{S}$, and $\Phi_\mathrm{M}$ with the density $\lambda_\mathrm{M}$, respectively. It is assumed that UEs are associated with the SBSs that can provide the maximum average received power, which is also utilized in 4G networks~\cite{D_liu_survey}. In addition, each channel
undergoes independent and identically distributed (i.i.d.) quasi-static Rayleigh fading.

\subsection{Content Placement}
Content placement mechanism is mainly designed based on content
popularity~\cite{lifeng_mag_2017}. We assume that there is a finite content library denoted as $\mathcal{F}:=\lbrace f_1,\dots,f_j,\dots,f_J \rbrace$, where $f_j$ is the $j$-th most popular content and the number of contents is $J$. The request probability for the $j$-th most popular content is commonly-modeled by following the {Zipf} distribution~\cite{Breslau_Zif_1999}
\begin{align}
{a}_j = {j^{-\varsigma} }/\sum\nolimits_{m= 1}^J {{m^{-\varsigma} }} ,
\end{align}
where $\varsigma $ is the Zipf exponent to represent the popularity skewness~\cite{Breslau_Zif_1999}. Each content is assumed to be unit size and each SBS can only cache $L \left(L \ll J\right)$ contents. We employ the probabilistic caching strategy~\cite{Blaszczyszyn_2015}, i.e., {the probability that the content $j$ is cached at an arbitrary SBS is $q_j$$(0\leq q_j \leq 1)$, and the sum of probabilities for all the contents being cached at an arbitrary SBS should be less than the SBS's cache size (namely $\sum\limits_{j{\rm{ = }}1}^J {{q_j}} {\rm{ \leq }}L$)~\cite{Blaszczyszyn_2015}.} { Note that based on the probabilistic caching strategy, each SBS only stores $L$ files from the content library for each caching realization, which is also illustrated in Fig.1 of~\cite{Blaszczyszyn_2015}. }


\subsection{Self-backhaul Load}
We assume that the access and backhaul links share the same sub-6 GHz spectrum. The bandwidths allocated to the access and backhaul links are $\eta W$ and $\left(1-\eta\right) W$, respectively, where $\eta$ is the fraction factor and $W$ is the system bandwidth. The number of UEs that is associated with an SBS is denoted by $K$.  The UEs in the same small cell are  served in a time-division manner with equal-time sharing. Thus, the fraction of time-frequency resources allocated to each access link is $\eta W/K$ during the cached content delivery. {When an associated SBS does not cache the requested content, it has to be connected to an MBS that provides the strongest wireless backhaul link such that the requested content can be obtained from core networks. Note that different SBSs may be served by different MBSs. Let $S_j \left(N \gg S_j\right)$ denote the number of SBSs served by the $j$-th MBS $\left(j\in \Phi_\mathrm{M}\right)$ for wireless backhaul.}

{ Hit probability characterizes the probability that a requested content file is stored at an arbitrary SBS~\cite{lifeng_mag_2017}, and is calculated as $q_{\mathrm{hit}}=\sum\limits_{j{\rm{ = }}1}^J {{a_j}{q_j}}$.}  The set of SBSs can be partitioned into two independent HPPPs $\Phi_\mathrm{S}^\mathrm{a}$ and $\Phi_\mathrm{S}^\mathrm{b}$ based on the thinning theorem~\cite{Baccelli2009}, where $\Phi_\mathrm{S}^\mathrm{a}$ with the density $\lambda_\mathrm{S} q_{\mathrm{hit}}$ denotes the point process of SBSs with access links, and $\Phi_\mathrm{S}^\mathrm{b}$  with the density $\lambda_\mathrm{S}\left(1-q_{\mathrm{hit}}\right)$ denotes the point process of SBSs with backhaul links. Let $\omega_\mathrm{b}=\lambda_\mathrm{S}\left(1-q_{\mathrm{hit}}\right)/\lambda_\mathrm{M}$ represent the average number of SBSs served by an MBS for wireless backhaul.

\subsection{Resource Allocation Model}
We consider the saturated traffic condition, i.e.,  all the SBSs keep active to serve their associated UEs.
\subsubsection{Access} When the requested content is stored at a typical SBS, the  rate for a typical access link is given by
\begin{align}\label{access_Rate}
R_{\rm{a}}=\frac{\eta W}{K}\log_2\Bigg(1+\frac{{{P_\mathrm{a}}{h_o}L\left( {\left| {{X_o}} \right|} \right)}}{{ \underbrace{\sum\limits_{i \in {\Phi_\mathrm{S}^\mathrm{a}}\backslash \left\{ o \right\}} {{P_\mathrm{a}}{h_i}L\left( {\left| {{X_{o{\rm{,}}i}}} \right|} \right)}}_{I_\mathrm{a}} + \sigma _\mathrm{a}^2}}\Bigg),
\end{align}
{where $I_\mathrm{a}$ denotes the total interference power from other SBSs; $P_\mathrm{a}$ is the SBS's transmit power; $L\left(\left|X\right|\right)=\beta \left(\left|X\right|\right)^{-\alpha_\mathrm{a}}$ denotes the path loss with frequency dependent constant value $\beta$, distance $\left|X\right|$ and path loss exponent $\alpha_\mathrm{a}$;   $h_o\sim \rm{exp}(1)$ and $\left| {{X_o}} \right|$ are the small-scale fading channel power gain and distance between the typical UE and its associated SBS, respectively; $h_i\sim \rm{exp}(1)$ and $\left| {{X_{o{\rm{,}}i}}} \right|$ are the small-scale fading interfering channel power gain and distance between the typical UE and the interfering SBS $i\in {\Phi_\mathrm{S}^\mathrm{a}}\backslash \left\{ o \right\}$ (except the typical SBS $o$) respectively, and $\sigma _a^2$ is the noise power at the typical UE.}

\subsubsection{Self-Backhaul} When the requested content is not stored at SBSs, it is obtained through massive MIMO backhaul.
For massive MIMO backhaul link, we consider that massive MIMO enabled MBS adopts zero-forcing beamforming with equal power allocation~\cite{Hosseini2014_Massive}. {In such a time-division duplex (TDD) massive MIMO self-backhauled network, SBSs will not perform any channel estimation\footnote{In TDD massive MIMO systems, downlink precoder is designed based on the uplink channel estimation, thanks to channel reciprocity~\cite{Bjorson_mag_2016}. Moreover, when deploying massive number of antennas at MBS, wireless channel behaves in a deterministic manner called ``channel hardening'', and the effect of small-scale fading could be negligible~\cite{hien_2018_pilot}. Thus, in the considered massive MIMO backhaul scenario where MBSs and SBSs are usually still, the coherence time of backhaul channel will be much longer than ever before, and the time occupied by uplink channel estimation will be much lower.

{It should be noted that when high-mobility UEs  are served by TDD massive MIMO BSs, downlink pilots may still be needed to estimate the fast-changing channels~\cite{Malkowsky_2017_access}.}}, and we will adopt an achievable backhaul transmission rate as confirmed in~\cite{J_JOSE_2011,J_Hoydis_2013}.}   Based on the instantaneous received signal expression in~\cite[Eq. 6]{J_JOSE_2011}, given a typical distance $\left| {{Y_o}} \right|$ between the typical SBS and its associated MBS,  the instantaneous rate for a typical massive MIMO backhaul link is given by
\begin{align}\label{rate_backhaul}
R_{\rm{b}}={\left(1-\eta\right) W}\log_2\left(1+\mathrm{SINR_b}\right)
\end{align}  
with
\begin{align*}
&\hspace{-0.2 cm}\mathrm{SINR_b}= \nonumber\\
&\hspace{-0.3 cm}\frac{{\frac{{{P_\mathrm{b}}}}{S_o}{\left(\mathbb{E}\left\{\sqrt{{g_o}  }\right\}\right)^2}L\left( {\left| {{Y_o}} \right|} \right)}}
{{\frac{{{P_\mathrm{b}}}}{S_o}\left(\sqrt{{g_o}  }-\mathbb{E}\left\{\sqrt{{g_o}  }\right\}\right)^2  L\left( {\left| {{Y_o}} \right|} \right)+\underbrace{\sum\limits_{j \in \Phi _{\rm{M}}\backslash \left\{ o \right\}} {\frac{{{P_\mathrm{b}}}}{S_j}{g_j}L\left( {\left| {{Y_{o{\rm{,}}j}}} \right|} \right)}}_{I_\mathrm{b}}+\sigma_\mathrm{b}^2}},
\end{align*}
{where $\mathbb{E}\left\{\cdot\right\}$ is the expectation operator. Here,  $I_\mathrm{b}$ denotes the total interference power from other MBSs; $P_\mathrm{b}$ is the MBS's transmit power; $L\left(\left|Y \right|\right)=\beta \left(\left|Y \right|\right)^{-\alpha_\mathrm{b}}$ denotes the path loss with the distance $\left|Y\right|$ and path loss exponent $\alpha_\mathrm{b}$; $g_o \sim \Gamma\left(N-S_o+1,1\right)$ is the small-scale fading channel power gain between the typical SBS and its associated MBS;  $g_j\sim \Gamma\left(S_j,1\right)$\footnote{$\Gamma\left(\cdot,\cdot\right)$ is the upper incomplete gamma function~\cite[(8.350)]{gradshteyn}.}, and $\left| {{Y_{o{\rm{,}}j}}} \right|$ are the small-scale fading interfering channel power gain and distance between the typical SBS and interfering MBS $j$, respectively, and $\sigma _\mathrm{b}^2$ is the noise power at the typical SBS.}

After obtaining the requested content via backhaul, the associated SBS delivers it to the corresponding UE. In this case, the corresponding access-link rate is expressed as
\begin{align}\label{AL_backhaul}
R_{\mathrm{a}^{'}}&=\frac{\left(1-\eta\right) W}{K} \times  \nonumber\\
&\qquad\log_2\Bigg(1+\frac{{{P_{\mathrm{a}}}{h_o}L\left( {\left| {{X_o}} \right|} \right)}}{{ \underbrace{\sum\limits_{{i^{'}} \in {\Phi_\mathrm{S}^\mathrm{b}}\backslash \left\{ o \right\}} {{P_{\mathrm{a}}}{h_{i^{'}}}L\left( {\left| {{X_{o{\rm{,}}i^{'}}}} \right|}
 \right)}}_{I_{\mathrm{a}^{'}}} + \sigma _{\mathrm{a}^{'}}^2}}\Bigg),
\end{align} 
where $I_{\mathrm{a}^{'}}$ is the total interference power,  $h_{i^{'}}\sim \rm{exp}(1)$ and $L\left( {\left| {{X_{o{\rm{,}}i^{'}}}} \right|} \right)=\beta \left(\left|{{X_{o{\rm{,}}i^{'}}}} \right|\right)^{-\alpha_\mathrm{a}}$ are the small-scale fading channel power gain and pathloss between the typical SBS and interfering SBS ${i^{'}} \in {\Phi_\mathrm{S}^\mathrm{b}}\backslash \left\{ o \right\}$, respectively, and $\sigma _{\mathrm{a}^{'}}^2$ is the noise power at the typical UE.

From \eqref{rate_backhaul} and \eqref{AL_backhaul}, {we see that to reduce latency, massive MIMO backhaul link needs to be of   high-speed, which can be achieved by using large antenna arrays at the MBS.}  In the following section, we will further examine how much backhaul time is needed at an achievable backhaul rate.

\section{Content Delivery Efficiency}
{In this paper, there are two cases for successful content delivery (SCD), i.e., 1) when the associated BS has cached the requested content,  SCD occurs if the time for successfully delivering $Q$ bits will not exceed the threshold $T_\mathrm{th}$; and 2) when the requested content is not cached at the associated BS and needs to be obtained via massive MIMO backhaul, SCD occurs if the total time for successfully delivering $Q$ bits to the UE is less than $T_\mathrm{th}$.}
\subsection{Cached Content Delivery}
Different from \cite{ZCH-COO,Youjia_chen_2017,J_Wen_2017} where it is assumed that each small cell has only one active UE, we evaluate SCD probability by considering multiple UEs served by an SBS, and analyze the effect of resource allocation on SCD probability. We first have the following important theorem.
\begin{theorem}
When a requested content is stored at the typical SBS, the SCD probability is derived as
\begin{align}\label{SCDP_cached}
&{\Psi_{\mathrm{SCD}}^\mathrm{a}}\left( {Q,T_\mathrm{th}} \right)=\sum\limits_{k=1}^{{K_{\max}^{\rm{a}}}} \mathcal{P}_{{\frac{\lambda_\mathrm{U}}{\lambda_\mathrm{S}}}}\left(k\right),
\end{align}
where $\mathcal{P}_{\frac{\lambda_\mathrm{U}}{\lambda_\mathrm{S}}}\left(k\right)$ is the probability mass function (PMF) that there are other $k-1$ UEs (except typical UE) served by the typical SBS, and is given by $\mathcal{P}_{{\frac{\lambda_\mathrm{U}}{\lambda_\mathrm{S}}}}\left(k\right)=\frac{\gamma^\gamma}{\left(k-1\right)!}
\frac{\Gamma\left(k+\gamma\right)}{\Gamma\left(\gamma\right)} \frac{\left(\frac{\lambda_\mathrm{U}}{\lambda_\mathrm{S} }\right)^{k-1}}{ \left(\gamma+\frac{\lambda_\mathrm{U}}{\lambda_\mathrm{S}}\right)^{k+\gamma}   }$ with $\gamma=3.5$~\cite{J_S_Ferenc_2007}. In \eqref{SCDP_cached}, $K=K_{\max }^{\rm{a}}$ is the maximum load in a typical small cell, and can be efficiently obtained by using \textbf{Algorithm 1} to solve the following equation
\begin{align}\label{K_max_eq_11}
\frac{2^{\frac{{K_{\max }^{\rm{a}} Q}}{\eta W T_\mathrm{th}}+1}-2}{\alpha_\mathrm{a}-2}\chi_k^{\mathrm{a}}\left(K_{\max }^{\rm{a}}\right) =\frac{1-\epsilon}{q_{\mathrm{hit}}\epsilon},
\end{align}
where $\chi_k^{\mathrm{a}}\left(K_{\max }^{\rm{a}}\right)={_2F_{1}}\left[1,1-\frac{2}{\alpha_\mathrm{a}};2-\frac{2}{\alpha_\mathrm{a}};
1-{2^{\frac{{K_{\max }^{\rm{a}} Q}}{{\eta W T_\mathrm{th}  }}}} \right]$, {$_2{F_1}\left[\cdot,\cdot;\cdot;\cdot\right]$
is the Gauss hypergeometric function~\cite[(9.142)]{gradshteyn}\footnote{ In MATLAB R2015b software, hypergeom([a,b],c,z) is the Gauss hypergeometric function $_2{F_1}\left[a,b;c;z\right]$.}}, and $\epsilon$ is the predefined threshold, i.e., SCD occurs when the probability that $R_\mathrm{a}$ is larger than $\frac{Q}{T_\mathrm{th}}$ is above  $\epsilon$.
\begin{table}[htbp]
\centering
\begin{tabular}{l}
\hline
{\textbf{Algorithm 1} One-dimension Search }\\ \hline
1: $\bf{if}$ $t=0$   \\
2: \quad Initialize $\varphi=\frac{1-\epsilon}{q_{\mathrm{hit}}\epsilon}$, $k^{l}=1, k^{h}=10 \times\frac{\lambda_\mathrm{U}}{\lambda_\mathrm{S}}$, and calculate \\
~~~~$F^{l}=\frac{2^{\frac{{k^{l} Q}}{\eta W T_\mathrm{th}}+1}-2}{\alpha_\mathrm{a}-2} {_2F_{1}}\left[1,1-\frac{2}{\alpha_\mathrm{a}};2-\frac{2}{\alpha_\mathrm{a}};
1-{2^{\frac{{k^{l} Q}}{{\eta W T_\mathrm{th}  }}}} \right]$ \\
and \\
~~~~$F^{h}=\frac{2^{\frac{{k^{h} Q}}{\eta W T_\mathrm{th}}+1}-2}{\alpha_\mathrm{a}-2} {_2F_{1}}\left[1,1-\frac{2}{\alpha_\mathrm{a}};2-\frac{2}{\alpha_\mathrm{a}};
1-{2^{\frac{{k^{h} Q}}{{\eta W T_\mathrm{th}  }}}} \right]$  \\
3: \bf{else} \\
4: \quad  \textbf{While} $F^{l} \neq \varphi$ and $F^{h} \neq \varphi$ \\
5: \qquad  Let $k=\frac{k^{l}+k^{h}}{2}$, and compute $F_k$. \\
6: \qquad $\bf{if}$ $F_k=\varphi$ \\
7: \qquad~~~ The optimal $k^{*}$ is obtained, i.e., $K_{\max}^{\rm{a}}=round\left( {k^{*}}\right)$. \\
8: \qquad~~~ \bf{break}   \\
9: \qquad \bf{elseif} $F_k<\varphi$ \\
10: \qquad~~~ $k^{l}=k$. \\
11: \qquad \bf{else} $F_k>\varphi$ \\
12: \qquad~~~$k^{h}=k$. \\
13: \qquad \bf{end if}  \\
14: \quad \bf{end while} \\
15: \bf{end if}  \\
\hline
\end{tabular}
\label{table:1}
\end{table}
\begin{proof}
See Appendix A.
\end{proof}

\end{theorem}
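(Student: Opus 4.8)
The plan is to reduce the successful content delivery (SCD) event to a signal-to-interference coverage condition at a fixed cell load, evaluate that coverage probability by stochastic geometry, and then average over the random number of co-served UEs. First I would condition on the typical SBS serving exactly $K=k$ UEs. Since the $k$ UEs share the access resources in equal-time fashion, the per-UE rate is $R_{\mathrm{a}}=\frac{\eta W}{k}\log_2(1+\mathrm{SINR})$, so the reliability requirement $R_{\mathrm{a}}\ge Q/T_\mathrm{th}$ is equivalent to $\mathrm{SINR}\ge\theta_k$ with the load-dependent threshold $\theta_k=2^{kQ/(\eta W T_\mathrm{th})}-1$. Note that $2\theta_k=2^{kQ/(\eta W T_\mathrm{th})+1}-2$ and $-\theta_k=1-2^{kQ/(\eta W T_\mathrm{th})}$, which already matches the numerator and the hypergeometric argument appearing in~\eqref{K_max_eq_11}.

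Next I would compute the coverage probability $\mathcal{C}(\theta_k):=\Pr(\mathrm{SINR}\ge\theta_k)$ in the interference-limited regime (the noise term $\sigma_\mathrm{a}^2$ being negligible in dense deployments, consistent with the clean form of~\eqref{K_max_eq_11}). Because $h_o\sim\exp(1)$, conditioning on the serving distance $|X_o|=r$ turns the event into $h_o\ge\theta_k r^{\alpha_\mathrm{a}}I_\mathrm{a}/(P_\mathrm{a}\beta)$, whose conditional probability is the Laplace transform $\mathcal{L}_{I_\mathrm{a}}(\theta_k r^{\alpha_\mathrm{a}}/(P_\mathrm{a}\beta))$. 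The crucial modelling step is the bookkeeping of densities: the UE associates with the nearest SBS in the full process $\Phi_\mathrm{S}$, so $r$ has density $2\pi\lambda_\mathrm{S}r\,e^{-\pi\lambda_\mathrm{S}r^2}$, whereas the interferers are only the access-mode SBSs $\Phi_\mathrm{S}^\mathrm{a}$. Conditioning on the requested content being cached at the serving SBS is a Bernoulli mark that does not alter the serving-distance distribution, but by the thinning theorem the interferers lying outside the disk of radius $r$ form a PPP of \emph{reduced} density $\lambda_\mathrm{S}q_{\mathrm{hit}}$. Evaluating the standard Rayleigh Laplace-transform integral yields $\mathcal{L}_{I_\mathrm{a}}=\exp(-\pi\lambda_\mathrm{S}q_{\mathrm{hit}}r^2\,\mathcal{I}(\theta_k))$ with $\mathcal{I}(\theta_k)=\int_1^\infty\theta_k/(u^{\alpha_\mathrm{a}/2}+\theta_k)\,du$, and integrating over $r$ collapses the two densities into $\mathcal{C}(\theta_k)=1/(1+q_{\mathrm{hit}}\mathcal{I}(\theta_k))$. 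This asymmetry between $\lambda_\mathrm{S}$ and $\lambda_\mathrm{S}q_{\mathrm{hit}}$ is exactly what makes $q_{\mathrm{hit}}$ survive into the final formula, and getting the thinning-plus-nearest-neighbour conditioning right is the main obstacle.

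It then remains to cast $\mathcal{I}$ in closed form and to identify $K_{\max}^{\rm{a}}$. Using the substitution $t=u^{-\alpha_\mathrm{a}/2}$ and the Euler integral representation of the Gauss hypergeometric function, I would show $\mathcal{I}(\theta_k)=\frac{2\theta_k}{\alpha_\mathrm{a}-2}\,{}_2F_1[1,1-\frac{2}{\alpha_\mathrm{a}};2-\frac{2}{\alpha_\mathrm{a}};-\theta_k]$, which is precisely $\chi_k^{\mathrm{a}}$ scaled as in~\eqref{K_max_eq_11}. Since $\mathcal{I}(\theta_k)$ is strictly increasing in $\theta_k$ and $\theta_k$ is increasing in $k$, the coverage $\mathcal{C}(\theta_k)$ decreases monotonically in the load; hence the reliability condition $\mathcal{C}(\theta_k)\ge\epsilon$ holds for all $k$ up to a unique threshold. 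Setting $\mathcal{C}=\epsilon$ gives $q_{\mathrm{hit}}\mathcal{I}=\frac{1-\epsilon}{\epsilon}$, i.e. $\mathcal{I}(\theta_{K_{\max}^{\rm{a}}})=\frac{1-\epsilon}{q_{\mathrm{hit}}\epsilon}$, which is~\eqref{K_max_eq_11}; the same monotonicity justifies the bisection in \textbf{Algorithm 1} and the final rounding.

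Finally I would average over the random load. SCD occurs for a typical cell precisely when its load does not exceed the admissible maximum, i.e. $K\le K_{\max}^{\rm{a}}$. Invoking the known area-biased cell-occupancy distribution of the Poisson--Voronoi tessellation (the typical UE lies in an area-weighted cell, with the $\gamma=3.5$ gamma approximation quoted in the statement), the number of UEs sharing the serving SBS has PMF $\mathcal{P}_{\lambda_\mathrm{U}/\lambda_\mathrm{S}}(k)$, so that $\Psi_{\mathrm{SCD}}^\mathrm{a}(Q,T_\mathrm{th})=\Pr(K\le K_{\max}^{\rm{a}})=\sum_{k=1}^{K_{\max}^{\rm{a}}}\mathcal{P}_{\lambda_\mathrm{U}/\lambda_\mathrm{S}}(k)$, which is~\eqref{SCDP_cached}. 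I do not expect this averaging step to present difficulties beyond invoking the cited occupancy distribution.
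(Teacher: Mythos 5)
Your proposal is correct and follows essentially the same route as the paper's Appendix A: conditioning on the cell load $K=k$, computing the conditional coverage probability in the interference-limited regime via the Laplace transform of the interference from the thinned process of density $\lambda_\mathrm{S}q_{\mathrm{hit}}$ while keeping the full density $\lambda_\mathrm{S}$ for the serving-distance distribution, obtaining the closed form $\Lambda\left(k\right)=\left(1+ q_{\mathrm{hit}}\frac{2^{\frac{kQ}{\eta W T_\mathrm{th}}+1}-2}{\alpha_\mathrm{a}-2} \chi_k^{\mathrm{a}}\left(k\right)\right)^{-1}$, identifying $K_{\max}^{\rm{a}}$ from $\Lambda\left(k\right)=\epsilon$ by monotonicity in $k$, and summing the Poisson--Voronoi load PMF up to $K_{\max}^{\rm{a}}$. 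Your explicit remarks on the independent-marking argument (caching status not biasing the serving distance) and on the hypergeometric identity for the interference integral only make explicit what the paper leaves implicit; there is no gap to flag.
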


It is implied from \textbf{Theorem 1} that in the dense small cell networks (i.e., interference-limited)\footnote{The near-field pathloss exponent is assumed to be larger than 2~\cite{lifeng_mag_2017}.}, the SCD probability depends on the ratio of UE density to SBS density and hit probability given the time-frequency resource allocation. Based on \textbf{Theorem 1}, we have
\begin{corollary}
From \eqref{K_max_eq_11}, we see that to achieve the load $K=K_{\max }^{\rm{a}} \geq 1$ in a small cell, the hit probability should satisfy
\begin{align}\label{hit_pro}
q_{\mathrm{hit}} \leq \min\left\{ \Xi_{\rm{a}} \frac{1-\epsilon}{\epsilon},1\right\},
\end{align}
where $\Xi_{\rm{a}}=\left(\frac{2^{\frac{{ Q}}{\eta W T_\mathrm{th}}+1}-2}{\alpha_\mathrm{a}-2}\chi_k^{\mathrm{a}} \left(1\right)\right)^{-1}$.

It is indicated from \eqref{hit_pro} that there is an upper-bound on the hit probability, which can be explained by the fact that when more UEs can obtain their requested contents from their associated SBSs in dense cellular networks with large hit probability, there will also be more interference from nearby SBSs  that hinders the cached content delivery.
\end{corollary}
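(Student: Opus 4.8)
The plan is to read \eqref{K_max_eq_11} as an implicit definition of $K_{\max}^{\rm{a}}$ as a function of $q_{\mathrm{hit}}$, and to recover the threshold purely from monotonicity. First I would collect the left-hand side of \eqref{K_max_eq_11} into a single map
\begin{align*}
g\left(K\right)=\frac{2^{\frac{KQ}{\eta W T_\mathrm{th}}+1}-2}{\alpha_\mathrm{a}-2}\,{_2F_{1}}\!\left[1,1-\tfrac{2}{\alpha_\mathrm{a}};2-\tfrac{2}{\alpha_\mathrm{a}};1-2^{\frac{KQ}{\eta W T_\mathrm{th}}}\right],
\end{align*}
so that \eqref{K_max_eq_11} becomes $g\left(K_{\max}^{\rm{a}}\right)=\frac{1-\epsilon}{q_{\mathrm{hit}}\epsilon}$ and, by the definition of $\Xi_{\rm{a}}$ in \eqref{hit_pro}, $g\left(1\right)=\Xi_{\rm{a}}^{-1}$. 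The right-hand side $\frac{1-\epsilon}{q_{\mathrm{hit}}\epsilon}$ is manifestly strictly decreasing in $q_{\mathrm{hit}}$ on $\left(0,1\right]$, so the whole statement reduces to controlling the sign of the variation of $g$.

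The crux, and the step I expect to be the main obstacle, is to show that $g$ is strictly increasing on $K\geq 0$; the hypergeometric argument $1-2^{KQ/(\eta W T_\mathrm{th})}$ is non-positive and moves deeper into the negative axis as $K$ grows, so a term-by-term series argument is awkward because of sign alternation. Instead I would pass to the Euler integral representation, which is valid here since $1-\tfrac{2}{\alpha_\mathrm{a}}>0$ (because $\alpha_\mathrm{a}>2$) and the argument is $\leq 0<1$: writing $s=2^{KQ/(\eta W T_\mathrm{th})}\geq 1$ and using ${_2F_{1}}\!\left[1,b;b+1;z\right]=b\int_0^1 \tau^{b-1}\left(1-z\tau\right)^{-1}d\tau$ with $b=1-\tfrac{2}{\alpha_\mathrm{a}}=\tfrac{\alpha_\mathrm{a}-2}{\alpha_\mathrm{a}}$, the prefactor $\tfrac{b}{\alpha_\mathrm{a}-2}$ collapses to $\tfrac{1}{\alpha_\mathrm{a}}$ and $g$ reduces to
\begin{align*}
g\left(K\right)=\frac{2}{\alpha_\mathrm{a}}\int_0^1\frac{\left(s-1\right)\tau^{b-1}}{1+\left(s-1\right)\tau}\,d\tau.
\end{align*}
Setting $v=s-1\geq 0$, the elementary identity $\frac{\partial}{\partial v}\frac{v}{1+v\tau}=\left(1+v\tau\right)^{-2}>0$ shows the integrand strictly increases in $v$ for every $\tau\in\left(0,1\right)$; since $\tau^{b-1}>0$ is integrable for $b\in\left(0,1\right)$, it follows that $g$ is strictly increasing in $v$, hence in $K$, because $s$ is strictly increasing in $K$.

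With monotonicity in hand the conclusion is immediate. Because $g$ is strictly increasing and $g\left(1\right)=\Xi_{\rm{a}}^{-1}$, the requirement $K_{\max}^{\rm{a}}\geq 1$ is equivalent to $g\left(K_{\max}^{\rm{a}}\right)\geq g\left(1\right)$, i.e.\ $\frac{1-\epsilon}{q_{\mathrm{hit}}\epsilon}\geq \Xi_{\rm{a}}^{-1}$ after substituting \eqref{K_max_eq_11}; cross-multiplying by the positive quantities $q_{\mathrm{hit}}\epsilon$ and $\Xi_{\rm{a}}$ rearranges this to $q_{\mathrm{hit}}\leq \Xi_{\rm{a}}\frac{1-\epsilon}{\epsilon}$. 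Finally, since $q_{\mathrm{hit}}$ is a probability it always obeys $q_{\mathrm{hit}}\leq 1$, and intersecting the two constraints yields exactly \eqref{hit_pro}, namely $q_{\mathrm{hit}}\leq\min\left\{\Xi_{\rm{a}}\frac{1-\epsilon}{\epsilon},1\right\}$.
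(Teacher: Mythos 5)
Your proof is correct, and its skeleton (read \eqref{K_max_eq_11} as defining $K_{\max}^{\rm{a}}$ implicitly, exploit monotonicity of the left-hand side $g$, evaluate at $K=1$, intersect with the trivial bound $q_{\mathrm{hit}}\leq 1$) is exactly the paper's implicit reasoning: the paper states this corollary with no separate proof, leaning on the observation in Appendix A that the conditional SCD probability $\Lambda\left(k\right)=\bigl(1+q_{\mathrm{hit}}\,g\left(k\right)\bigr)^{-1}$ is decreasing in $k$, which is the same fact as your claim that $g$ is increasing. The genuine difference is that the paper merely \emph{asserts} this monotonicity (it is invoked only to justify the bisection in Algorithm 1), whereas you actually prove it, via the Euler integral representation ${_2F_{1}}\left[1,b;b+1;z\right]=b\int_0^1 \tau^{b-1}\left(1-z\tau\right)^{-1}d\tau$ with $b=\frac{\alpha_\mathrm{a}-2}{\alpha_\mathrm{a}}\in\left(0,1\right)$, which is valid here and correctly collapses the prefactor to $\frac{2}{\alpha_\mathrm{a}}$; differentiating the integrand in $v=s-1$ then settles the sign question rigorously. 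This closes a gap the paper leaves open, though it is heavier machinery than needed: since $g\left(k\right)\propto \Lambda\left(k\right)^{-1}-1$ and, from Appendix A, $\Lambda\left(k\right)=\Pr\bigl(\mathrm{SINR}\geq 2^{kQ/\left(\eta W T_\mathrm{th}\right)}-1\bigr)$ is a fixed CCDF evaluated at a threshold strictly increasing in $k$, monotonicity is immediate with no hypergeometric analysis at all. Either way, your derivation of the bound $q_{\mathrm{hit}}\leq \Xi_{\rm{a}}\frac{1-\epsilon}{\epsilon}$ and the intersection with $q_{\mathrm{hit}}\leq 1$ matches \eqref{hit_pro}.
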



In realistic networks, there may be overload issues when the scale of small cells is not adequate to support large level of connectivity, which needs to be addressed. Therefore, given a specified scale of UEs $\lambda_\mathrm{U}$, we evaluate the minimum required scale of small cells as follows.
 \begin{corollary}
To mitigate the harm of overloading,  the minimum required SBS density needs to satisfy
\begin{align}\label{SBS_density_min_theo1}
{\lambda_\mathrm{S}}=\left\{ \begin{array}{l}
\frac{\lambda_\mathrm{U}}{K_{\max}^{\rm{a}}+1},\; if \;{\mathcal{P}_{\frac{\lambda_\mathrm{U}}{\lambda_\mathrm{S}}=K_{\max}^{\rm{a}}+1}\left(K_{\max}^{\rm{a}}+1\right)} \leq \rho,
\\
\frac{\lambda_\mathrm{U}}{\mu_{\rm{a}}},\; if \;{\mathcal{P}_{\frac{\lambda_\mathrm{U}}{\lambda_\mathrm{S}}=K_{\max}^{\rm{a}}+1}\left(K_{\max}^{\rm{a}}+1\right)} > \rho,
\end{array} \right.
\end{align}
where $\mu_{\rm{a}}\in \Big(0,\frac{K_{\max}^{\rm{a}}\gamma}{\gamma+1}\Big]$ is the solution of $\mathcal{P}_{\frac{\lambda_\mathrm{U}}{\lambda_\mathrm{S}}=\mu_{\rm{a}}}\left(k=K_{\max}^{\rm{a}}+1\right)=\rho$ with arbitrary small $\rho>0$, and can be easily obtained via one-dimension search, similar to \textbf{Algorithm 1}. Such network deployment given in \eqref{SBS_density_min_theo1} can guarantee
$\mathcal{P}_{\frac{\lambda_\mathrm{U}}{\lambda_\mathrm{S}}}\left(k\right)\leq\rho, \forall k>K_{\max}^{\rm{a}}$.
\end{corollary}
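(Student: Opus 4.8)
The plan is to treat $x:=\lambda_{\mathrm U}/\lambda_{\mathrm S}$ as the single design variable and to show that the stated choice guarantees the tail bound $\mathcal{P}_x(k)\le\rho$ for every $k>K_{\max}^{\rm a}$. The backbone of the argument is a pair of monotonicity facts about the PMF in \eqref{SCDP_cached}: one in the index $k$ (for fixed $x$) and one in the ratio $x$ (for fixed $k=K_{\max}^{\rm a}+1$). Throughout I write $\mathcal{P}_x(k)=C_k\,x^{k-1}/(\gamma+x)^{k+\gamma}$ with $C_k=\frac{\gamma^\gamma}{(k-1)!}\frac{\Gamma(k+\gamma)}{\Gamma(\gamma)}$ a constant in $x$.

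First I would establish monotonicity in $k$. A direct cancellation gives
\begin{align}
\frac{\mathcal{P}_x(k+1)}{\mathcal{P}_x(k)}=\frac{(k+\gamma)\,x}{k\,(\gamma+x)},
\end{align}
which is at most $1$ precisely when $k\ge x$. Hence whenever $x\le K_{\max}^{\rm a}+1$, the sequence $\{\mathcal{P}_x(k)\}_{k\ge K_{\max}^{\rm a}+1}$ is non-increasing, so its supremum over $k>K_{\max}^{\rm a}$ is attained at $k=K_{\max}^{\rm a}+1$. This reduces the desired guarantee to two conditions: (a) $x\le K_{\max}^{\rm a}+1$, and (b) $\mathcal{P}_x(K_{\max}^{\rm a}+1)\le\rho$.

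Next I would analyse $\mathcal{P}_x(K_{\max}^{\rm a}+1)$ as a function of $x$. Differentiating $\log\mathcal{P}_x(k)=\log C_k+(k-1)\log x-(k+\gamma)\log(\gamma+x)$ and equating to zero yields a single stationary point $x^{\star}=\frac{(k-1)\gamma}{\gamma+1}$; for $k=K_{\max}^{\rm a}+1$ this is $x^{\star}=\frac{K_{\max}^{\rm a}\gamma}{\gamma+1}$, exactly the right endpoint of the interval quoted for $\mu_{\rm a}$. Since the log-derivative is positive for $x<x^{\star}$ and negative for $x>x^{\star}$, the map $x\mapsto\mathcal{P}_x(K_{\max}^{\rm a}+1)$ rises from $0$ (as $x\to0^{+}$, using $k-1\ge1$ because $K_{\max}^{\rm a}\ge1$) to its maximum at $x^{\star}$ and then decays back to $0$. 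The case split is now immediate. If $\mathcal{P}_{K_{\max}^{\rm a}+1}(K_{\max}^{\rm a}+1)\le\rho$, I take $x=K_{\max}^{\rm a}+1$, i.e. $\lambda_{\mathrm S}=\lambda_{\mathrm U}/(K_{\max}^{\rm a}+1)$; conditions (a) and (b) both hold (the former with equality), and the guarantee follows. Otherwise $\mathcal{P}_{K_{\max}^{\rm a}+1}(K_{\max}^{\rm a}+1)>\rho$, and since $K_{\max}^{\rm a}+1>x^{\star}$ this value sits on the decreasing branch; as $\rho$ is arbitrarily small (below the peak $\mathcal{P}_{x^{\star}}(K_{\max}^{\rm a}+1)$), the intermediate value theorem produces a unique root $\mu_{\rm a}\in(0,x^{\star}]$ of $\mathcal{P}_x(K_{\max}^{\rm a}+1)=\rho$ on the increasing branch. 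Choosing $x=\mu_{\rm a}$ gives (b) with equality, while $\mu_{\rm a}\le x^{\star}<K_{\max}^{\rm a}+1$ secures (a), completing the proof.

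The step I expect to be the main obstacle is selecting the correct root in the second case: the equation $\mathcal{P}_x(K_{\max}^{\rm a}+1)=\rho$ also has a root to the right of $x^{\star}$, and that root exceeds $K_{\max}^{\rm a}+1$, violating condition (a) and thereby breaking the reduction from the $k$-monotonicity step. Pinning $\mu_{\rm a}$ to the interval $(0,x^{\star}]$ is precisely what rules this out, and it is also what certifies that $\lambda_{\mathrm U}/\mu_{\rm a}$ is the \emph{minimum} admissible density, since it corresponds to the largest feasible ratio $x$. The one place where care is genuinely needed is making the ``arbitrary small $\rho$'' hypothesis precise—namely $\rho<\mathcal{P}_{x^{\star}}(K_{\max}^{\rm a}+1)$—so that the left root is guaranteed to exist.
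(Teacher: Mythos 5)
Your proof is correct and follows essentially the same route as the paper's Appendix B: the same ratio computation $\mathcal{P}_x(k+1)/\mathcal{P}_x(k)=\left(1+\tfrac{\gamma}{k}\right)\tfrac{x}{\gamma+x}$ reducing the tail bound to the single index $k=K_{\max}^{\mathrm{a}}+1$ under $x\le K_{\max}^{\mathrm{a}}+1$, and the same derivative/unimodality analysis in $x$ locating the peak at $\tfrac{K_{\max}^{\mathrm{a}}\gamma}{\gamma+1}$ to justify the case split and the root $\mu_{\mathrm{a}}$. Your extra care in rejecting the right-hand root of $\mathcal{P}_x(K_{\max}^{\mathrm{a}}+1)=\rho$ and in making the ``arbitrarily small $\rho$'' hypothesis precise sharpens, but does not change, the paper's argument.
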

\begin{proof}
See Appendix B.
\end{proof}

From \eqref{SBS_density_min_theo1}, we see that the minimum required density of SBSs only depends on   the maximum load of a small cell and the density of UEs in dense cache-enabled cellular networks.

\subsection{Self-backhauled Content Delivery}
\subsubsection{Massive MIMO Backhaul}When the required content is not stored at the typical SBS, SBS has to obtain it from the core network via massive MIMO backhaul. Therefore, we need to evaluate the backhaul time for delivering the requested content to the typical SBS.  {It should be noted that the load in a macrocell will not change fast,  in order to deliver the requested contents to the associated SBSs. Hence, given the load $S_o$ in a typical macrocell, the achievable transmission rate for a typical backhaul link is given by}
\begin{align}\label{backhaul_tran_rate_exp}
&\overline{R}_\mathrm{b}\left(S_o\right)={\left(1-\eta\right) W} \int_{r_\mathrm{b}}^\infty C_\mathrm{b}\left(y\right)\frac{2 \pi \lambda_\mathrm{M} y e^{-\pi \lambda_\mathrm{M} y^2}}{e^{-\pi\lambda_\mathrm{M} r_\mathrm{b}^2}}  dy,
\end{align}
where  $C_\mathrm{b}\left(y\right)=\log_2\Bigg(1+\frac{\frac{{{P_\mathrm{b}}}}{S_o} \Xi_1\left(y\right)}{{\frac{{{P_\mathrm{b}}}}{S_o}\Xi_2\left(y\right)+\Xi_3\left(y\right)+\sigma_\mathrm{b}^2}}\Bigg)$ with $\Xi_1\left(y\right)=L\left( y \right) \left(\frac{ \Gamma \left( {N - S_o + \frac{3}{2}} \right)}{\Gamma \left( {N - S_o + 1} \right)}\right)^2$, $\Xi_2\left(y\right)=(N-S_o+1)L\left( y \right)-\Xi_1$, and $\Xi_3\left(y\right)=P_\mathrm{b} 2 \pi \lambda_\mathrm{M} \beta \frac{y^{2-\alpha_\mathrm{b}} }{\alpha_\mathrm{b}-2}$, and $r_\mathrm{b}$ is the minimum distance between the typical MBS and its associated SBS. A detailed derivation of \eqref{backhaul_tran_rate_exp} is provided in Appendix C. Therefore, the time for delivering $Q$ bits to the typical SBS via wireless backhaul is $T_1=\frac{Q}{\overline{R}_\mathrm{b}}$.  When the number of antennas at the  MBS grows large, we have the following corollary.
\begin{corollary}\label{time_constraint}
 For large $N$, the achievable transmission rate for a typical backhaul link is tightly lower-bounded as
 {\begin{align}\label{coro_3_eq_lowerbound}
 \overline{R}_\mathrm{b}^{\mathrm{Low}}\left(S_o\right)={\left(1-\eta\right) W} \log_2\left(1+P_\mathrm{b}\beta\frac{N-S_o+\frac{1}{2}}{S_o}e^{\overline{\Delta}_1-\overline{\Delta}_2} \right),
 \end{align}}
where
 \begin{equation}\label{Delta_1_2_asympt}
 \left\{\begin{aligned}
 & \overline{\Delta}_1=-{\alpha_\mathrm{b}}{e^{\pi\lambda_\mathrm{M} r_\mathrm{b}^2}}\Big(-\frac{Ei\big(-r_\mathrm{b}^2 \pi \lambda_\mathrm{M} \big)}{2}+e^{-r_\mathrm{b}^2 \pi \lambda_\mathrm{M}}\ln r_\mathrm{b} \Big), \\
 &\overline{\Delta}_2=\int_{r_\mathrm{b}}^\infty \ln\left(\frac{P_\mathrm{b}\beta}{2S_o}y^{-r_\mathrm{b}}+P_\mathrm{b} 2 \pi \lambda_\mathrm{M} \beta \frac{y^{2-\alpha_\mathrm{b}} }{\alpha_\mathrm{b}-2}+\sigma_\mathrm{b}^2\right) \nonumber\\
&\;\;\;\qquad\;\; \times \frac{2 \pi \lambda_\mathrm{M} y}{e^{-\pi\lambda_\mathrm{M} r_\mathrm{b}^2}} e^{-\pi \lambda_\mathrm{M} y^2} dy,
\end{aligned}\right.
 \end{equation}
 in which $Ei\left(z\right)$ is the exponential integral given by $Ei\left(z\right)=-\int_{-z}^\infty \frac{e^{-t}}{t} dt$~\cite{gradshteyn}. Based on \eqref{coro_3_eq_lowerbound}, the typical MBS's required time for  delivering $Q$ bits to its associated SBS  satisfies
 \begin{align}\label{backhaul_time_constraint}
 T_1 \leq \frac{{Q}{\left(1-\eta\right)^{-1} W^{-1}}} {\log_2\left(1+P_\mathrm{b}\beta\frac{\left(N-S_o+\frac{1}{2}\right)}{S_o}e^{\overline{\Delta}_1-\overline{\Delta}_2} \right)}.
\end{align}
 \end{corollary}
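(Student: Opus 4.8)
The plan is to begin from the exact conditional rate \eqref{backhaul_tran_rate_exp}, write it as $\overline{R}_\mathrm{b}\left(S_o\right)=\left(1-\eta\right)W\,\mathbb{E}_y\!\left[C_\mathrm{b}\left(y\right)\right]$ where the expectation is taken over the conditional nearest-MBS distance with density $f\left(y\right)=2\pi\lambda_\mathrm{M}ye^{-\pi\lambda_\mathrm{M}y^2}/e^{-\pi\lambda_\mathrm{M}r_\mathrm{b}^2}$ on $y\geq r_\mathrm{b}$, and then lower-bound this distance-average by pushing the expectation inside the logarithm. First I would simplify the Gamma-ratio coefficients in the large-$N$ regime: employing the channel-hardening approximation $\left(\mathbb{E}\{\sqrt{g_o}\}\right)^2=\left(\Gamma\left(N-S_o+\tfrac{3}{2}\right)/\Gamma\left(N-S_o+1\right)\right)^2\approx N-S_o+\tfrac{1}{2}$ and, since $\mathbb{E}\{g_o\}=N-S_o+1$, the identity $\var\{\sqrt{g_o}\}=\mathbb{E}\{g_o\}-(\mathbb{E}\{\sqrt{g_o}\})^2\approx\tfrac{1}{2}$ forces $\Xi_2\left(y\right)\approx\tfrac{1}{2}L\left(y\right)$, i.e. $\tfrac{P_\mathrm{b}}{S_o}\Xi_2\left(y\right)\approx\tfrac{P_\mathrm{b}\beta}{2S_o}y^{-\alpha_\mathrm{b}}$. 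This puts $C_\mathrm{b}\left(y\right)=\log_2\!\big(1+\mathrm{SINR}\left(y\right)\big)$ into exactly the form whose numerator and denominator feed $e^{\overline{\Delta}_1}$ and $\overline{\Delta}_2$.

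The crux is to apply Jensen's inequality through the correct convex map. I would write $\mathrm{SINR}\left(y\right)=e^{\ln A\left(y\right)-\ln B\left(y\right)}$ with $A\left(y\right)=\tfrac{P_\mathrm{b}}{S_o}\Xi_1\left(y\right)$ and $B\left(y\right)=\tfrac{P_\mathrm{b}}{S_o}\Xi_2\left(y\right)+\Xi_3\left(y\right)+\sigma_\mathrm{b}^2$, and use that $z\mapsto\log_2\left(1+e^z\right)$ is convex (its second derivative $\tfrac{1}{\ln 2}\,e^z/(1+e^z)^2$ is strictly positive). Convexity and $\mathbb{E}\!\left[g(Z)\right]\geq g\!\left(\mathbb{E}[Z]\right)$ then yield
\begin{align*}
\mathbb{E}_y\!\left[\log_2\!\left(1+e^{\ln A\left(y\right)-\ln B\left(y\right)}\right)\right]\geq \log_2\!\left(1+e^{\mathbb{E}_y\left[\ln A\left(y\right)\right]-\mathbb{E}_y\left[\ln B\left(y\right)\right]}\right),
\end{align*}
which is precisely the structure of \eqref{coro_3_eq_lowerbound} once the two averages are identified.

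It then remains to evaluate the two distance-averages. Since $\ln A\left(y\right)=\ln\!\big(\tfrac{P_\mathrm{b}}{S_o}\beta\left(N-S_o+\tfrac12\right)\big)-\alpha_\mathrm{b}\ln y$, the signal term reduces to computing $\mathbb{E}_y\left[\ln y\right]$. Integrating by parts with $v=-e^{-\pi\lambda_\mathrm{M}y^2}$ produces a boundary contribution $e^{-\pi\lambda_\mathrm{M}r_\mathrm{b}^2}\ln r_\mathrm{b}$ plus $\int_{r_\mathrm{b}}^\infty e^{-\pi\lambda_\mathrm{M}y^2}/y\,dy$, and the substitution $u=\pi\lambda_\mathrm{M}y^2$ turns the latter into $-\tfrac12 Ei\left(-\pi\lambda_\mathrm{M}r_\mathrm{b}^2\right)$; multiplying by $-\alpha_\mathrm{b}$ and the normalizer $e^{\pi\lambda_\mathrm{M}r_\mathrm{b}^2}$ reproduces $\overline{\Delta}_1$ in \eqref{Delta_1_2_asympt}. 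The denominator average $\mathbb{E}_y\left[\ln B\left(y\right)\right]$ is, by construction of the integrand, exactly $\overline{\Delta}_2$. Collecting the two terms gives $e^{\mathbb{E}_y[\ln A]-\mathbb{E}_y[\ln B]}=P_\mathrm{b}\beta\tfrac{N-S_o+1/2}{S_o}e^{\overline{\Delta}_1-\overline{\Delta}_2}$, establishing \eqref{coro_3_eq_lowerbound}; the time bound \eqref{backhaul_time_constraint} is then immediate, since $T_1=Q/\overline{R}_\mathrm{b}\leq Q/\overline{R}_\mathrm{b}^{\mathrm{Low}}$ whenever $\overline{R}_\mathrm{b}\geq\overline{R}_\mathrm{b}^{\mathrm{Low}}$.

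I expect the main obstacle to be the closed-form evaluation of $\mathbb{E}_y[\ln y]$ as an exponential integral and, more subtly, the bookkeeping of the large-$N$ approximations: the constant $\tfrac12$ in the signal and self-interference coefficients must be applied consistently (so that the Jensen step remains a genuine lower bound on the approximated rate), and one should verify that $\mathbb{E}_y[\ln B(y)]$ is finite. The $y^{2-\alpha_\mathrm{b}}$ interference term together with the constant noise floor $\sigma_\mathrm{b}^2$ keep $\ln B(y)$ integrable against the Gaussian-tailed density, but the behaviour near $y\to r_\mathrm{b}$ should be checked to confirm convergence of $\overline{\Delta}_2$.
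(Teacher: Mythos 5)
Your proposal is correct and follows essentially the same route as the paper's Appendix D: the Stirling/large-$N$ reduction $\Xi_1\left(y\right)\approx\left(N-S_o+\tfrac{1}{2}\right)L\left(y\right)$, $\Xi_2\left(y\right)\approx\tfrac{1}{2}L\left(y\right)$, then Jensen's inequality applied through the convex map $z\mapsto\log_2\left(1+e^z\right)$ to pull the expectations $\mathbb{E}_y\left[\ln A\right]$ and $\mathbb{E}_y\left[\ln B\right]$ inside, the exponential-integral evaluation of $\mathbb{E}_y\left[\ln y\right]$, and finally $T_1=Q/\overline{R}_\mathrm{b}\leq Q/\overline{R}_\mathrm{b}^{\mathrm{Low}}$. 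If anything, you make explicit two steps the paper only cites or states (the convexity justifying the Jensen direction, and the integration by parts yielding $Ei$), and your $y^{-\alpha_\mathrm{b}}$ in the self-interference term is the correct form of the exponent appearing in $\overline{\Delta}_2$.
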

\begin{proof}
See Appendix D.
\end{proof}

It is explicitly shown from \textbf{Corollary 3} that large number of antennas and bandwidths are required, in order to significantly reduce the wireless backhaul delivery time. From \eqref{backhaul_time_constraint}, we see that the backhaul delivery time can at least be cut proportionally to { $1/\log_2 N$}.

In the self-backhauled networks, the number of SBSs  being simultaneously served by an MBS for wireless backhaul should not exceed the maximum value denoted by $S_{\max}$, i.e., $S_o \leq S_{\max}$; otherwise high-speed massive MIMO aided backhaul transmission cannot be guaranteed. Hence, given the minimum  required backhaul transmission rate $R_\mathrm{b}^{\min}$, the maximum backhaul load of a typical massive MIMO MBS  is the solution of $\overline{R}_\mathrm{b}\left(S_{\max}\right)=R_\mathrm{b}^{\min}$, which can be efficiently obtained by using one-dimension search since $\overline{R}_\mathrm{b}\left(S_o\right)$ is a decreasing function of $S_o$ for large $N$, as suggested in Appendix D. After obtaining $S_{\max}$, we can obtain the minimum number of massive MIMO aided MBSs that needs to be deployed, in order to mitigate the backhaul overload.
\begin{corollary}
 Similar to \textbf{Corollary 2}, the minimum required density of MBSs is given by
\begin{align}\label{backhaul_massive_MIMO_density}
\lambda_\mathrm{M}=\left\{ \begin{array}{l}
\frac{\lambda_\mathrm{S}\left(1-q_{\rm{hit}}\right)}{S_{\max}+1},\; if \;\mathcal{P}_{\omega_\mathrm{b}=S_{\max}+1}\left(S_{\max}+1\right) \leq \rho,
\\
\frac{\lambda_\mathrm{S}\left(1-q_{\rm{hit}}\right)}{\mu_{\rm{b}}},\; if \;\mathcal{P}_{\omega_\mathrm{b}=S_{\max}+1}\left(S_{\max}+1\right) > \rho,
\end{array} \right.
\end{align}
where $\mathcal{P}_{\omega_\mathrm{b}}\left(\ell\right)=\frac{\gamma^\gamma}{\left(\ell-1\right)!}
\frac{\Gamma\left(\ell+\gamma\right)}{\Gamma\left(\gamma\right)} \frac{\left(\omega_\mathrm{b}\right)^{\ell-1}}{ \left(\gamma+\omega_\mathrm{b}\right)^{\ell+\gamma}   }$, $\mu_{\rm{b}}\in \Big(0,\frac{S_{\max}\gamma}{\gamma+1}\Big]$ is the solution of $\mathcal{P}_{\omega_\mathrm{b}=\mu_{\rm{b}}}\left(S_{\max}+1\right)=\rho$ with arbitrary small $\rho>0$, and can be easily obtained via one-dimension search.
\end{corollary}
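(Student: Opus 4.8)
The plan is to transplant the argument behind \textbf{Corollary 2} (Appendix B) from the UE--SBS association onto the backhaul SBS--MBS association. First I would identify the relevant load law: by the thinning theorem the backhaul SBSs form an independent HPPP $\Phi_\mathrm{S}^\mathrm{b}$ of density $\lambda_\mathrm{S}\left(1-q_{\mathrm{hit}}\right)$, and since every such SBS attaches to the MBS offering the strongest backhaul link, the number of backhaul SBSs served by the typical MBS obeys the same Poisson--Voronoi cell-occupancy approximation (shape parameter $\gamma=3.5$) that underlies \textbf{Theorem 1}, now with mean-load parameter $\omega_\mathrm{b}=\lambda_\mathrm{S}\left(1-q_{\mathrm{hit}}\right)/\lambda_\mathrm{M}$. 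Hence the probability that the typical MBS serves exactly $\ell$ backhaul SBSs is $\mathcal{P}_{\omega_\mathrm{b}}\left(\ell\right)$, and ``avoiding backhaul overload'' is encoded, exactly as in \textbf{Corollary 2}, by the per-value tail requirement $\mathcal{P}_{\omega_\mathrm{b}}\left(\ell\right)\le\rho$ for every $\ell>S_{\max}$.

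Next I would collapse this infinite family of inequalities to a single binding one. The ratio $\mathcal{P}_{\omega_\mathrm{b}}\left(\ell+1\right)/\mathcal{P}_{\omega_\mathrm{b}}\left(\ell\right)=\frac{\left(\ell+\gamma\right)\omega_\mathrm{b}}{\ell\left(\gamma+\omega_\mathrm{b}\right)}$ exceeds one precisely when $\ell<\omega_\mathrm{b}$, so for fixed $\omega_\mathrm{b}$ the PMF is unimodal in $\ell$ with mode near $\ell\approx\omega_\mathrm{b}$. Consequently, as long as the operating point obeys $\omega_\mathrm{b}\le S_{\max}+1$, the tail $\left\{\mathcal{P}_{\omega_\mathrm{b}}\left(\ell\right)\right\}_{\ell\ge S_{\max}+1}$ is nonincreasing and its maximum is attained at the first overload value $\ell=S_{\max}+1$. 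The tail requirement then reduces to the scalar condition $\mathcal{P}_{\omega_\mathrm{b}}\left(S_{\max}+1\right)\le\rho$.

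It remains to select the largest admissible $\omega_\mathrm{b}$, equivalently the smallest $\lambda_\mathrm{M}$, within this regime. Here I would use the complementary unimodality in $\omega_\mathrm{b}$: setting $\frac{d}{d\omega_\mathrm{b}}\ln\mathcal{P}_{\omega_\mathrm{b}}\left(S_{\max}+1\right)=0$ yields the single stationary point $\omega^{\star}=\frac{S_{\max}\gamma}{\gamma+1}$, so $\mathcal{P}_{\omega_\mathrm{b}}\left(S_{\max}+1\right)$ rises on $\left(0,\omega^{\star}\right]$ and falls thereafter, with $\omega^{\star}<S_{\max}+1$. If $\mathcal{P}_{\omega_\mathrm{b}=S_{\max}+1}\left(S_{\max}+1\right)\le\rho$, the cap $\omega_\mathrm{b}=S_{\max}+1$ already meets the scalar condition, giving the first branch $\lambda_\mathrm{M}=\lambda_\mathrm{S}\left(1-q_{\mathrm{hit}}\right)/\left(S_{\max}+1\right)$. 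Otherwise the condition fails at the cap, and since moving left along the falling branch only increases $\mathcal{P}_{\omega_\mathrm{b}}\left(S_{\max}+1\right)$, the admissible set within $\left(0,S_{\max}+1\right]$ is $\left(0,\mu_\mathrm{b}\right]$, where $\mu_\mathrm{b}\in\left(0,\omega^{\star}\right]=\left(0,\frac{S_{\max}\gamma}{\gamma+1}\right]$ is the smaller root of $\mathcal{P}_{\omega_\mathrm{b}=\mu_\mathrm{b}}\left(S_{\max}+1\right)=\rho$; its largest element $\mu_\mathrm{b}$ yields the second branch $\lambda_\mathrm{M}=\lambda_\mathrm{S}\left(1-q_{\mathrm{hit}}\right)/\mu_\mathrm{b}$, and this root is located by a one-dimensional search as in \textbf{Algorithm 1}.

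The main obstacle is reconciling the two unimodality arguments: the chosen operating point must simultaneously sit on the rising branch of $\mathcal{P}_{\omega_\mathrm{b}}\left(S_{\max}+1\right)$ in the $\omega_\mathrm{b}$-direction and keep $S_{\max}+1$ on the nonincreasing side of the PMF in the $\ell$-direction. Confining $\mu_\mathrm{b}$ to $\left(0,\frac{S_{\max}\gamma}{\gamma+1}\right]$ is exactly the device that guarantees $\mu_\mathrm{b}\le S_{\max}+1$ and thus keeps the tail reduction of the second step valid; verifying this compatibility, rather than any single computation, is the delicate part.
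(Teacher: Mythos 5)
Your proposal is correct and follows essentially the same route as the paper: the paper proves this corollary simply by transplanting the Appendix~B argument for \textbf{Corollary 2} onto the backhaul association, replacing $\lambda_\mathrm{U}/\lambda_\mathrm{S}$ by $\omega_\mathrm{b}=\lambda_\mathrm{S}\left(1-q_{\mathrm{hit}}\right)/\lambda_\mathrm{M}$ and $K_{\max}^{\rm{a}}$ by $S_{\max}$, using the same ratio test for tail monotonicity in $\ell$ and the same first-order derivative (unimodality in $\omega_\mathrm{b}$ with stationary point $\frac{S_{\max}\gamma}{\gamma+1}$) to select between the two branches. Your added remark on the compatibility of the two unimodality arguments is a careful articulation of what the paper leaves implicit, but it is not a different method.
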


It is explicitly shown in \eqref{backhaul_massive_MIMO_density} that higher hit probability can significantly reduce the scale of MBSs because of less backhaul.

\subsubsection{Access} After obtaining the required content via backhaul, the typical SBS transmits it to the associated UE. Thus, we have the following important theorem.
\begin{theorem}
When the required content is not stored at the typical SBS and has to be obtained via massive MIMO self-backhaul, the SCD probability is derived as
\begin{align}\label{SCDP_backhaul_pro}
{\Psi_{\mathrm{SCD}}^\mathrm{b}}\left( {Q,T_\mathrm{th}} \right)=\sum\limits_{k=1}^{{K_{\max}^{\rm{b}}}} \mathcal{P}_{{\frac{\lambda_\mathrm{U}}{\lambda_\mathrm{S}}}}\left(k\right),
\end{align}
where $K_{\max}^{\rm{b}}$ is the maximum number of UEs that a typical small cell can serve when the typical UE's content needs to be attained via backhaul, and $K_{\max}^{\rm{b}}$ can be obtained by solving the following equation{\footnote{It can be  solved by following \textbf{Algorithm 1}.}}
\begin{align}\label{K_b_num}
\frac{2^{\frac{{K_{\max}^{\rm{b}}Q}}{{\left(1-\eta\right) W \left(T_\mathrm{th}-T_1\right)  }}+1}-2}{\alpha_\mathrm{a}-2}\chi_k^{\mathrm{b}}\left(K_{\max}^{\rm{b}}\right)=\frac{1-\epsilon}{\left(1-q_{\mathrm{hit}}\right)\epsilon}
\end{align}
with $\chi_k^{\mathrm{b}}\left(K_{\max}^{\rm{b}}\right)={_2F_{1}}\left[1,\frac{\alpha_\mathrm{a}-2}{\alpha_\mathrm{a}};\frac{2\alpha_\mathrm{a}-2}{\alpha_\mathrm{a}};
1-{2^{\frac{{K_{\max}^{\rm{b}}Q}}{{\left(1-\eta\right) W \left(T_\mathrm{th}-T_1\right)  }}}} \right]$, and the minimum required SBS density for mitigating overload is given from \eqref{SBS_density_min_theo1} by interchanging $K_{\max}^{\rm{a}}\rightarrow K_{\max}^{\rm{b}}$.
\begin{proof}
See Appendix E.
\end{proof}

\end{theorem}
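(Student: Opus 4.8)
The plan is to mirror the argument behind \textbf{Theorem 1} (Appendix A), transplanting it from the cached-delivery access link \eqref{access_Rate} to the post-backhaul access link \eqref{AL_backhaul}, and tracking the three quantities that change: the access bandwidth $\eta W \to (1-\eta)W$, the available delivery time $T_\mathrm{th}\to T_\mathrm{th}-T_1$, and the interfering point process $\Phi_\mathrm{S}^\mathrm{a}\to\Phi_\mathrm{S}^\mathrm{b}$. First I would decompose the SCD event. Since the backhaul phase occupies time $T_1$ (quantified in \textbf{Corollary 3}), delivering $Q$ bits within the latency budget $T_\mathrm{th}$ forces the access link to deliver $Q$ bits within the residual time $T_\mathrm{th}-T_1$. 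Conditioning on a load $K=k$, this is equivalent to $R_{\mathrm{a}'}>Q/(T_\mathrm{th}-T_1)$, which by \eqref{AL_backhaul} rearranges to the threshold crossing $\mathrm{SINR}_{\mathrm{a}'}>\tau_k^\mathrm{b}$ with $\tau_k^\mathrm{b}=2^{kQ/\left((1-\eta)W(T_\mathrm{th}-T_1)\right)}-1$.

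Next I would evaluate $\mathbb{P}\!\left(\mathrm{SINR}_{\mathrm{a}'}>\tau_k^\mathrm{b}\right)$ in the interference-limited regime. The crucial structural feature is a density mismatch: the typical UE still associates with its geographically nearest SBS, so the serving distance $|X_o|$ carries the nearest-neighbour density $2\pi\lambda_\mathrm{S} r\, e^{-\pi\lambda_\mathrm{S} r^2}$ governed by the full intensity $\lambda_\mathrm{S}$, whereas the interference $I_{\mathrm{a}'}$ is generated only by the thinned process $\Phi_\mathrm{S}^\mathrm{b}$ of intensity $\lambda_\mathrm{S}(1-q_{\mathrm{hit}})$. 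Exploiting $h_o\sim\exp(1)$, the conditional coverage given $|X_o|=r$ equals the Laplace transform $\mathcal{L}_{I_{\mathrm{a}'}}\!\left(\tau_k^\mathrm{b} r^{\alpha_\mathrm{a}}/(P_\mathrm{a}\beta)\right)$; since every interferer lies beyond the serving distance, the probability-generating-functional integral runs over $v>r$. After the substitution $u=v/r$ the radial integral collapses to $J(\tau_k^\mathrm{b})=\int_1^\infty \frac{\tau_k^\mathrm{b} u}{u^{\alpha_\mathrm{a}}+\tau_k^\mathrm{b}}\,du=\frac{\tau_k^\mathrm{b}}{\alpha_\mathrm{a}-2}\,{}_2F_1\!\left[1,1-\tfrac{2}{\alpha_\mathrm{a}};2-\tfrac{2}{\alpha_\mathrm{a}};-\tau_k^\mathrm{b}\right]$, and averaging the factor $\exp\!\left(-2\pi\lambda_\mathrm{S}(1-q_{\mathrm{hit}})r^2 J(\tau_k^\mathrm{b})\right)$ over the serving-distance density produces the closed form $\mathbb{P}\!\left(\mathrm{SINR}_{\mathrm{a}'}>\tau_k^\mathrm{b}\right)=\left[1+2(1-q_{\mathrm{hit}})J(\tau_k^\mathrm{b})\right]^{-1}$, in which the thinned intensity survives as the factor $(1-q_{\mathrm{hit}})$ precisely because it does not cancel against the serving density $\lambda_\mathrm{S}$.

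The per-load SCD criterion is $\mathbb{P}\!\left(\mathrm{SINR}_{\mathrm{a}'}>\tau_k^\mathrm{b}\right)\ge\epsilon$. Substituting the closed form and using $2\tau_k^\mathrm{b}=2^{kQ/((1-\eta)W(T_\mathrm{th}-T_1))+1}-2$ together with $-\tau_k^\mathrm{b}=1-2^{kQ/((1-\eta)W(T_\mathrm{th}-T_1))}$ reproduces \eqref{K_b_num} exactly, where $\chi_k^\mathrm{b}$ is the same hypergeometric factor rewritten via $1-2/\alpha_\mathrm{a}=(\alpha_\mathrm{a}-2)/\alpha_\mathrm{a}$ and $2-2/\alpha_\mathrm{a}=(2\alpha_\mathrm{a}-2)/\alpha_\mathrm{a}$. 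Because $\tau_k^\mathrm{b}$ is strictly increasing in $k$, the coverage probability is strictly decreasing in $k$, so the constraint holds for every $k\le K_{\max}^\mathrm{b}$ and fails for $k>K_{\max}^\mathrm{b}$; hence $K_{\max}^\mathrm{b}$ is the largest admissible load, obtained by the bisection of \textbf{Algorithm 1}. Finally, since the load distribution of a Poisson--Voronoi cell is unchanged by the delivery mode, I would sum $\mathcal{P}_{\lambda_\mathrm{U}/\lambda_\mathrm{S}}(k)$ over $k=1,\dots,K_{\max}^\mathrm{b}$ to obtain \eqref{SCDP_backhaul_pro}, and invoke \textbf{Corollary 2} verbatim under $K_{\max}^\mathrm{a}\to K_{\max}^\mathrm{b}$ for the density statement.

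The main obstacle I anticipate is the density-mismatch bookkeeping: one must keep the nearest-SBS association drawing on the full intensity $\lambda_\mathrm{S}$ while letting only the backhaul-serving subset $\Phi_\mathrm{S}^\mathrm{b}$ contribute interference during the post-backhaul access phase, and correctly confine those interferers to distances exceeding the serving distance. Securing the surviving factor $(1-q_{\mathrm{hit}})$ --- rather than a spurious cancellation or a stray $q_{\mathrm{hit}}$ --- is exactly what distinguishes \eqref{K_b_num} from its cached counterpart \eqref{K_max_eq_11}. A secondary point is treating $T_1$ as a fixed offset: it depends on the macrocell backhaul load through \textbf{Corollary 3}, but within the access-phase analysis it enters only through the constant residual time $T_\mathrm{th}-T_1$, so no further averaging over that load is required here.
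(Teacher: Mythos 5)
Your proposal is correct and follows essentially the same route as the paper's Appendix E (which itself leans on Appendix A): decompose over the Poisson--Voronoi load PMF, compute the conditional coverage probability via the PGFL/Laplace transform with the serving distance drawn from the full intensity $\lambda_\mathrm{S}$ but interference from the thinned process $\Phi_\mathrm{S}^\mathrm{b}$ of intensity $\lambda_\mathrm{S}\left(1-q_{\mathrm{hit}}\right)$, set the resulting closed form equal to $\epsilon$ to define $K_{\max}^{\rm{b}}$, and sum the PMF up to that load. Your explicit bookkeeping of the surviving factor $\left(1-q_{\mathrm{hit}}\right)$ and of the hypergeometric-parameter rewriting $1-\tfrac{2}{\alpha_\mathrm{a}}=\tfrac{\alpha_\mathrm{a}-2}{\alpha_\mathrm{a}}$, $2-\tfrac{2}{\alpha_\mathrm{a}}=\tfrac{2\alpha_\mathrm{a}-2}{\alpha_\mathrm{a}}$ is exactly what the paper's terse ``Similar to \eqref{App_A_3}'' leaves implicit.
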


It is indicated from \eqref{K_b_num} that when a typical UE's requested content is not stored at the typical SBS, the number of UEs that can be served by the typical SBS decreases with increasing backhaul time.
Based on \textbf{Theorem 2}, we have the following corollary
\begin{corollary}
From \eqref{K_b_num}, we see that to achieve the load $K=K_{\max }^{\rm{b}} \geq 1$ in a small cell, the hit probability should satisfy
\begin{align}\label{hit_pro_back}
q_{\mathrm{hit}} \geq \left[1- \Xi_{\rm{b}}\frac{1-\epsilon}{\epsilon}\right]^{+},
\end{align}
where $\Xi_{\rm{b}}=\left(\frac{2^{\frac{{ Q}}{\left(1-\eta\right) W \left(T_\mathrm{th}-T_1\right)}+1}-2}{\alpha_\mathrm{a}-2}\chi_k^{\mathrm{b}} \left(1\right)\right)^{-1}$, and $\left[x\right]^{+}=\max\left\{x,0\right\}$.
\end{corollary}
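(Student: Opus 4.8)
The plan is to reproduce, with the appropriate sign reversal, the reasoning behind \textbf{Corollary 1}; the only structural input needed is that the left-hand side of \eqref{K_b_num} increases with the small-cell load. First I would abbreviate that left-hand side as
$$f_{\rm{b}}(K)=\frac{2^{\frac{KQ}{(1-\eta)W(T_\mathrm{th}-T_1)}+1}-2}{\alpha_\mathrm{a}-2}\,\chi_k^{\mathrm{b}}(K),$$
so that, by \textbf{Theorem 2}, the maximum admissible backhauled load $K_{\max}^{\rm{b}}$ is the value solving $f_{\rm{b}}(K_{\max}^{\rm{b}})=\frac{1-\epsilon}{(1-q_{\mathrm{hit}})\epsilon}$.

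Assuming $f_{\rm{b}}$ is increasing in $K$ (justified below), the condition $K_{\max}^{\rm{b}}\geq 1$ is equivalent to $f_{\rm{b}}(1)\leq\frac{1-\epsilon}{(1-q_{\mathrm{hit}})\epsilon}$: because $f_{\rm{b}}$ is increasing, the root of $f_{\rm{b}}(K)=\frac{1-\epsilon}{(1-q_{\mathrm{hit}})\epsilon}$ lies at or beyond $K=1$ exactly when the right-hand side is at least $f_{\rm{b}}(1)$. By the definition of $\Xi_{\rm{b}}$, evaluating at $K=1$ gives $f_{\rm{b}}(1)=\Xi_{\rm{b}}^{-1}$, so the requirement becomes $\Xi_{\rm{b}}^{-1}\leq\frac{1-\epsilon}{(1-q_{\mathrm{hit}})\epsilon}$. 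Rearranging for $q_{\mathrm{hit}}$ (with $q_{\mathrm{hit}}=1$ making the right-hand side infinite and hence imposing no constraint) gives $1-q_{\mathrm{hit}}\leq\Xi_{\rm{b}}\frac{1-\epsilon}{\epsilon}$, that is $q_{\mathrm{hit}}\geq 1-\Xi_{\rm{b}}\frac{1-\epsilon}{\epsilon}$; combining this with the trivial bound $q_{\mathrm{hit}}\geq 0$ and keeping the larger lower bound produces the positive-part expression \eqref{hit_pro_back}.

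The main obstacle is justifying the monotonicity of $f_{\rm{b}}(K)$, since it is the product of the exponential factor $2^{\frac{KQ}{(1-\eta)W(T_\mathrm{th}-T_1)}+1}-2$, which plainly grows with $K$, and the Gauss hypergeometric factor $\chi_k^{\mathrm{b}}(K)$, whose argument $1-2^{\frac{KQ}{(1-\eta)W(T_\mathrm{th}-T_1)}}$ becomes ever more negative, so the two factors pull in opposite directions and a direct derivative is awkward. Rather than estimate the special function, I would appeal to the probabilistic quantity from which \eqref{K_b_num} originates: \eqref{K_b_num} expresses the coverage requirement $\Pr\{R_{\mathrm{a}'}>Q/(T_\mathrm{th}-T_1)\}\geq\epsilon$, and in the interference-limited regime this coverage probability takes the form $\big(1+(1-q_{\mathrm{hit}})f_{\rm{b}}(K)\big)^{-1}$, a strictly decreasing function of $f_{\rm{b}}(K)$. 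Since coverage is manifestly nonincreasing in the per-UE SINR threshold $\theta_K^{\rm{b}}=2^{\frac{KQ}{(1-\eta)W(T_\mathrm{th}-T_1)}}-1$, and $\theta_K^{\rm{b}}$ is increasing in $K$, the quantity $f_{\rm{b}}(K)$ is increasing in $K$. This is precisely the monotonicity that renders the one-dimensional search in \textbf{Algorithm 1} well-posed, so it may equally be cited from the setup of \textbf{Theorem 2} rather than re-derived.
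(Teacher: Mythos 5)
Your proposal is correct and follows essentially the same route as the paper, which states this corollary as a direct consequence of \eqref{K_b_num}: evaluate the left-hand side at $K=1$ (giving $\Xi_{\rm{b}}^{-1}$ by definition), use monotonicity in $K$ to turn $K_{\max}^{\rm{b}}\geq 1$ into $\Xi_{\rm{b}}^{-1}\leq \frac{1-\epsilon}{(1-q_{\mathrm{hit}})\epsilon}$, and rearrange with the trivial bound $q_{\mathrm{hit}}\geq 0$ to obtain the positive-part expression. Your monotonicity justification via the coverage interpretation is exactly the fact the paper itself relies on (Appendices A and E note that the conditional SCD probability $\Lambda(k)$ is decreasing in $k$, which is what makes the one-dimension search of \textbf{Algorithm 1} well-posed), so nothing is missing.
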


From \eqref{hit_pro_back}, we see that there is a lower-bound on the hit probability, i.e., minimum cache capacity is demanded at the SBS, since more backhaul results in more interference, which  will degrade the self-backhauled content delivery.

\begin{corollary}
After obtaining the maximum load $K_{\max}^{\rm{b}}$, we can calculate the minimum required SBS density given from \eqref{SBS_density_min_theo1} by interchanging $K_{\max}^{\rm{a}}\rightarrow K_{\max}^{\rm{b}}$, to overcome overload.
\end{corollary}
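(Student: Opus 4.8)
The plan is to recognize that \textbf{Corollary 6} requires no genuinely new analysis: it follows by replaying the argument of \textbf{Corollary 2} (Appendix B) verbatim, with the single substitution $K_{\max}^{\rm{a}}\rightarrow K_{\max}^{\rm{b}}$. The crucial observation that makes this legitimate is that the per-cell load statistic $\mathcal{P}_{\lambda_\mathrm{U}/\lambda_\mathrm{S}}\left(k\right)$ --- the probability that a typical SBS serves $k$ UEs --- is a purely geometric property of the Poisson--Voronoi tessellation generated by $\Phi_\mathrm{S}$ together with the UE process $\Phi_\mathrm{U}$. It depends only on the density ratio $\lambda_\mathrm{U}/\lambda_\mathrm{S}$ and is completely insensitive to whether the content a UE requests happens to be cached at its SBS or must be fetched over the massive MIMO backhaul. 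Hence the distribution entering the overload analysis is identical in the cached and self-backhauled regimes; only the admissible maximum load changes, from $K_{\max}^{\rm{a}}$ (fixed by \eqref{K_max_eq_11}) to $K_{\max}^{\rm{b}}$ (fixed by \eqref{K_b_num}).

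First I would state the overload-avoidance target in scenario-agnostic form: given the load threshold $K_{\max}^{\rm{b}}$ supplied by \textbf{Theorem 2}, choose $\lambda_\mathrm{S}$ so that $\mathcal{P}_{\lambda_\mathrm{U}/\lambda_\mathrm{S}}\left(k\right)\leq \rho$ for every $k>K_{\max}^{\rm{b}}$. Next I would reduce this infinite family of inequalities to the single representative constraint at $k=K_{\max}^{\rm{b}}+1$. This reduction rests on the unimodality of the PMF in $k$: the ratio $\mathcal{P}_{\nu}\left(k+1\right)/\mathcal{P}_{\nu}\left(k\right)=\frac{\left(k+\gamma\right)\nu}{k\left(\gamma+\nu\right)}$ (with $\nu=\lambda_\mathrm{U}/\lambda_\mathrm{S}$) drops to and below one precisely once $k$ reaches the mode near $\nu$, so $\mathcal{P}_{\nu}\left(k\right)$ is non-increasing for all $k\geq K_{\max}^{\rm{b}}+1$ whenever the chosen density keeps the mode at or below that index. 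Since this ratio computation is algebraic in the generic index $k$ and never references the cached-content value $K_{\max}^{\rm{a}}$, it transfers to $K_{\max}^{\rm{b}}$ unchanged, and the tail condition collapses to $\mathcal{P}_{\lambda_\mathrm{U}/\lambda_\mathrm{S}}\left(K_{\max}^{\rm{b}}+1\right)\leq\rho$.

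I would then split into the two cases of \eqref{SBS_density_min_theo1}. If the baseline density $\lambda_\mathrm{S}=\lambda_\mathrm{U}/\left(K_{\max}^{\rm{b}}+1\right)$ (i.e. $\nu=K_{\max}^{\rm{b}}+1$) already yields $\mathcal{P}_{\nu}\left(K_{\max}^{\rm{b}}+1\right)\leq\rho$, it is admissible and nothing further is needed. Otherwise one must densify, lowering $\nu$ to the root of $\mathcal{P}_{\mu}\left(K_{\max}^{\rm{b}}+1\right)=\rho$ (the analogue of $\mu_{\rm{a}}$ in \eqref{SBS_density_min_theo1}). Existence and uniqueness of this root, and hence the validity of the one-dimensional search, follow from differentiating $\ln\mathcal{P}_{\mu}\left(K_{\max}^{\rm{b}}+1\right)$ in $\mu$, which is maximized at $\mu=\frac{K_{\max}^{\rm{b}}\gamma}{\gamma+1}$; on the interval $\big(0,\frac{K_{\max}^{\rm{b}}\gamma}{\gamma+1}\big]$ the map $\mu\mapsto\mathcal{P}_{\mu}\left(K_{\max}^{\rm{b}}+1\right)$ is strictly increasing from $0$, so it meets the level $\rho$ exactly once. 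Inserting this root and $K_{\max}^{\rm{b}}$ into \eqref{SBS_density_min_theo1} gives the claimed density.

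The main obstacle is therefore not a calculation but a bookkeeping check: I must confirm that every monotonicity fact invoked in Appendix B --- the unimodality in $k$ that licenses the single-point reduction, and the monotonicity in $\mu$ on $\big(0,\frac{K_{\max}^{\rm{b}}\gamma}{\gamma+1}\big]$ that licenses the search --- was established there for an arbitrary positive-integer load threshold and never exploited any feature specific to the cached case (for instance the appearance of $q_{\mathrm{hit}}$ in \eqref{K_max_eq_11} rather than $1-q_{\mathrm{hit}}$ in \eqref{K_b_num}, which affects only the computation of the threshold itself, not the load PMF). Since those facts are algebraic identities in the generic threshold, the substitution $K_{\max}^{\rm{a}}\rightarrow K_{\max}^{\rm{b}}$ is seen to be exact and \textbf{Corollary 6} follows.
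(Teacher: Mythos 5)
Your proposal is correct and takes essentially the same route as the paper: the paper offers no separate proof of this corollary, treating it as an immediate replay of Appendix B (the proof of \textbf{Corollary 2}) under the substitution $K_{\max}^{\rm{a}}\rightarrow K_{\max}^{\rm{b}}$, which is exactly what you carry out, including the ratio test $\mathcal{P}_{\nu}\left(k+1\right)/\mathcal{P}_{\nu}\left(k\right)\leq 1$ for $k\geq K_{\max}^{\rm{b}}+1$ and the monotonicity of $\mu\mapsto\mathcal{P}_{\mu}\left(K_{\max}^{\rm{b}}+1\right)$ on $\bigl(0,\frac{K_{\max}^{\rm{b}}\gamma}{\gamma+1}\bigr]$. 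Your explicit justification that the load PMF $\mathcal{P}_{\lambda_\mathrm{U}/\lambda_\mathrm{S}}\left(k\right)$ depends only on the density ratio and not on caching status is precisely the fact the paper uses implicitly by employing the identical PMF in both \textbf{Theorem 1} and \textbf{Theorem 2}.
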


Based on \textbf{Theorem 1} and \textbf{Theorem 2}, the SCD probability in dense cellular networks with massive MIMO self-backhaul for a typical UE is calculated as
\begin{align}\label{SCD_massiveMIMO_backhaul_access}
&\hspace{-0.2cm} {\Psi_{\mathrm{SCD}}}\left( {Q,T_\mathrm{th}} \right)=q_\mathrm{hit}{\Psi_{\mathrm{SCD}}^\mathrm{a}}\left( {Q,T_\mathrm{th}} \right)+\left(1-q_\mathrm{hit}\right){\Psi_{\mathrm{SCD}}^\mathrm{b}}\left( {Q,T_\mathrm{th}} \right)\nonumber\\
 &\hspace{-0.2cm}=\left\{ \begin{array}{l}
\sum\limits_{k=1}^{{K_{\max}^{\rm{b}}}} \mathcal{P}_{{\frac{\lambda_\mathrm{U}}{\lambda_\mathrm{S}}}}\left(k\right)+q_\mathrm{hit} \sum\limits_{k=K_{\max}^{\rm{b}}+1}^{{K_{\max}^{\rm{a}}}} \mathcal{P}_{{\frac{\lambda_\mathrm{U}}{\lambda_\mathrm{S}}}}\left(k\right),\; K_{\max}^{\rm{a}}\geq K_{\max}^{\rm{b}},
\\\sum\limits_{k=1}^{K_{\max}^{\rm{a}}} \mathcal{P}_{{\frac{\lambda_\mathrm{U}}{\lambda_\mathrm{S}}}}\left(k\right)+\left(1-q_\mathrm{hit}\right)
\\\qquad\quad~~~ \times \sum\limits_{k=K_{\max}^{\rm{a}}+1}^{{K_{\max}^{\rm{b}}}} \mathcal{P}_{{\frac{\lambda_\mathrm{U}}{\lambda_\mathrm{S}}}}\left(k\right)
,\; K_{\max}^{\rm{a}}<K_{\max}^{\rm{b}},
\end{array} \right.
\end{align}
where $K_{\max}^{\rm{a}}$ and $K_{\max}^{\rm{b}}$ are given by \eqref{K_max_eq_11} and \eqref{K_b_num}, respectively.

The SCD probability given in \eqref{SCD_massiveMIMO_backhaul_access} can be intuitively understood based on the fact that when the small cell load is light, UEs' requested contents can be successfully delivered whether they are cached or obtained from the core networks via massive MIMO backhaul. However, after a critical value of cell load, UEs can only obtain their requested contents that are cached by the SBSs or via backhaul, which depends on the maximum cell load in cached content delivery and self-backhauled content delivery cases.

\section{Content Placement, Cache Size and Latency}
In this section, we study the effects of content placement and cache size on the content delivery performance. Then, we evaluate the latency in such networks.

\subsection{Content Placement and Cache Size}
As shown in \eqref{SCD_massiveMIMO_backhaul_access}, hit probability plays an important role in content delivery. Since hit probability depends on
the cache size and content placement, SBSs with large storage capacity can cache more popular contents, to avoid
frequent backhaul and reduce backhaul cost and latency. Therefore, higher hit probability is meaningful to reduce the network's operational
and capital expenditures (OPEX, CAPEX). Given the SBS's cache size, different content placement strategies may result in various hit
probability, and caching the most popular contents (MPC) can achieve the highest hit probability, which is commonly-considered in the literature involving edge caching such as~\cite{E_Bastug_2014,S_H_Park_2016}. Therefore, we consider MPC caching
and analyze the appropriate cache size in such networks.  Considering the fact that for large $J$ with MPC caching,
$q_{\mathrm{hit}}={\sum\nolimits_{j= 1}^L {a_j }} \approx \left(\frac{L}{J}\right)^{1-\varsigma}$, we have
\begin{corollary}
Given $\frac{{{T_{\mathrm{th}}} - {T_1}}}{{{T_{\mathrm{th}}}}} \leq \frac{\eta }{ {1 - \eta } }$ (i.e., more time-frequency resources are allocated to the cached content delivery), the SCD probability is
\begin{align}\label{cache_size_factor}
{\Psi_{\mathrm{SCD}}}\left( {Q,T_\mathrm{th}} \right) \approx \sum\limits_{k=1}^{K_{\max}^{\rm{b}}} \mathcal{P}_{{\frac{\lambda_\mathrm{U}}{\lambda_\mathrm{S}}}}\left(k\right),
\end{align}
and it is an increasing function of the cache size, if the cache size $L\in \left[J\left(\left[1- \Xi_{\rm{b}}\frac{1-\epsilon}{\epsilon}\right]^{+}
\right)^{\frac{1}{1-\varsigma}},J\left(\frac{1}{2}\right)^{\frac{1}{1-\varsigma}}\right]$ and the minimum SBS density satisfies the condition given in \textbf{Corollary 6}; Given $\frac{{{T_{\mathrm{th}}} - {T_1}}}{{{T_{\mathrm{th}}}}} > \frac{\eta }{ {1 - \eta } }$, the SCD probability is
\begin{align}\label{cache_size_factor_cp}
{\Psi_{\mathrm{SCD}}}\left( {Q,T_\mathrm{th}} \right) \approx \sum\limits_{k=1}^{K_{\max}^{\rm{a}}} \mathcal{P}_{{\frac{\lambda_\mathrm{U}}{\lambda_\mathrm{S}}}}\left(k\right),
\end{align}
if $L\in \left[J\left(\frac{1}{2}\right)^{\frac{1}{1-\varsigma}},\left(\min\left\{ \Xi_{\rm{a}} \frac{1-\epsilon}{\epsilon},1\right\}\right)^{\frac{1}{1-\varsigma}}\right]$, and the minimum SBS density satisfies the condition given in \textbf{Corollary 4}.
\begin{proof}
See Appendix F.
\end{proof}

\end{corollary}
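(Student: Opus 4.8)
The plan is to collapse the two transcendental equations \eqref{K_max_eq_11} and \eqref{K_b_num} onto a single monotone master function and then read off the ordering of $K_{\max}^{\rm a}$ and $K_{\max}^{\rm b}$ from the two hypotheses. First I would note that the hypergeometric factors coincide: since $\frac{\alpha_\mathrm{a}-2}{\alpha_\mathrm{a}}=1-\frac{2}{\alpha_\mathrm{a}}$ and $\frac{2\alpha_\mathrm{a}-2}{\alpha_\mathrm{a}}=2-\frac{2}{\alpha_\mathrm{a}}$, the quantities $\chi_k^{\mathrm a}$ and $\chi_k^{\mathrm b}$ are the same special function, so both defining equations can be written as $g\left(K;\tau\right)=\frac{1-\epsilon}{q_{\mathrm{hit}}\epsilon}$ (cached) and $g\left(K;\tau\right)=\frac{1-\epsilon}{\left(1-q_{\mathrm{hit}}\right)\epsilon}$ (self-backhauled), where
\[
g\left(K;\tau\right)=\frac{2\left(u-1\right)}{\alpha_\mathrm{a}-2}\,{_2F_{1}}\left[1,1-\tfrac{2}{\alpha_\mathrm{a}};2-\tfrac{2}{\alpha_\mathrm{a}};1-u\right],\ u=2^{\frac{KQ}{W\tau}},
\]
and the only difference between the two cases is the effective resource $\tau$, equal to $\tau_\mathrm{a}=\eta T_\mathrm{th}$ for cached delivery and $\tau_\mathrm{b}=\left(1-\eta\right)\left(T_\mathrm{th}-T_1\right)$ for self-backhauled delivery. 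I would then record the two monotonicities I need: $g$ is increasing in $K$ and decreasing in $\tau$.

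For the first regime the hypothesis $\frac{T_\mathrm{th}-T_1}{T_\mathrm{th}}\le\frac{\eta}{1-\eta}$ is exactly $\tau_\mathrm{b}\le\tau_\mathrm{a}$, while the upper cache bound $L\le J\left(\frac{1}{2}\right)^{\frac{1}{1-\varsigma}}$ translates, through $q_{\mathrm{hit}}\approx\left(L/J\right)^{1-\varsigma}$ (with $\varsigma<1$), into $q_{\mathrm{hit}}\le\frac{1}{2}$ and hence $\frac{1-\epsilon}{q_{\mathrm{hit}}\epsilon}\ge\frac{1-\epsilon}{\left(1-q_{\mathrm{hit}}\right)\epsilon}$. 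Combining these, $g\left(K_{\max}^{\rm b};\tau_\mathrm{a}\right)\le g\left(K_{\max}^{\rm b};\tau_\mathrm{b}\right)=\frac{1-\epsilon}{\left(1-q_{\mathrm{hit}}\right)\epsilon}\le\frac{1-\epsilon}{q_{\mathrm{hit}}\epsilon}=g\left(K_{\max}^{\rm a};\tau_\mathrm{a}\right)$, so monotonicity in $K$ gives $K_{\max}^{\rm b}\le K_{\max}^{\rm a}$, placing us in the first branch of \eqref{SCD_massiveMIMO_backhaul_access}. The residual term is then controlled by the density of \textbf{Corollary 6}: since that choice guarantees $\mathcal{P}_{\frac{\lambda_\mathrm{U}}{\lambda_\mathrm{S}}}\left(k\right)\le\rho$ for every $k>K_{\max}^{\rm b}$, we have $q_\mathrm{hit}\sum_{k=K_{\max}^{\rm b}+1}^{K_{\max}^{\rm a}}\mathcal{P}_{\frac{\lambda_\mathrm{U}}{\lambda_\mathrm{S}}}\left(k\right)\le q_\mathrm{hit}\left(K_{\max}^{\rm a}-K_{\max}^{\rm b}\right)\rho$, which is negligible for arbitrarily small $\rho$, yielding \eqref{cache_size_factor}.

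Monotonicity in the cache size then follows immediately: the lower cache bound $L\ge J\left(\left[1-\Xi_{\rm b}\frac{1-\epsilon}{\epsilon}\right]^{+}\right)^{\frac{1}{1-\varsigma}}$ is, by \eqref{hit_pro_back}, precisely the condition $K_{\max}^{\rm b}\ge1$; and because $q_{\mathrm{hit}}=\left(L/J\right)^{1-\varsigma}$ increases with $L$, the right-hand side $\frac{1-\epsilon}{\left(1-q_{\mathrm{hit}}\right)\epsilon}$ grows with $L$, so the integer solution $K_{\max}^{\rm b}$ of $g\left(K;\tau_\mathrm{b}\right)=\frac{1-\epsilon}{\left(1-q_{\mathrm{hit}}\right)\epsilon}$ is nondecreasing, making $\sum_{k=1}^{K_{\max}^{\rm b}}\mathcal{P}_{\frac{\lambda_\mathrm{U}}{\lambda_\mathrm{S}}}\left(k\right)$ nondecreasing in $L$. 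The second regime is the mirror image: $\frac{T_\mathrm{th}-T_1}{T_\mathrm{th}}>\frac{\eta}{1-\eta}$ reads $\tau_\mathrm{b}>\tau_\mathrm{a}$, and $L\ge J\left(\frac{1}{2}\right)^{\frac{1}{1-\varsigma}}$ gives $q_{\mathrm{hit}}\ge\frac{1}{2}$ so $\frac{1-\epsilon}{\left(1-q_\mathrm{hit}\right)\epsilon}\ge\frac{1-\epsilon}{q_\mathrm{hit}\epsilon}$; the same comparison now yields $K_{\max}^{\rm a}\le K_{\max}^{\rm b}$, i.e.\ the second branch of \eqref{SCD_massiveMIMO_backhaul_access}, and the density of \textbf{Corollary 4} makes $\left(1-q_\mathrm{hit}\right)\sum_{k=K_{\max}^{\rm a}+1}^{K_{\max}^{\rm b}}\mathcal{P}_{\frac{\lambda_\mathrm{U}}{\lambda_\mathrm{S}}}\left(k\right)$ negligible, giving \eqref{cache_size_factor_cp}. (The upper endpoint of this range should carry a factor $J$, i.e.\ $J\left(\min\{\Xi_\mathrm{a}\frac{1-\epsilon}{\epsilon},1\}\right)^{\frac{1}{1-\varsigma}}$, to match $q_{\mathrm{hit}}\le\min\{\Xi_\mathrm{a}\frac{1-\epsilon}{\epsilon},1\}$ from \eqref{hit_pro}.)

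The one genuinely technical point is the monotonicity of $g$ in $K$, which hinges on the hypergeometric factor. I would establish it from the integral representation ${_2F_{1}}\left[1,b;b+1;1-u\right]=b\int_0^1\frac{t^{b-1}}{1-t+ut}\,dt$ with $b=1-\frac{2}{\alpha_\mathrm{a}}\in\left(0,1\right)$ (valid since $\alpha_\mathrm{a}>2$), showing that $\left(u-1\right){_2F_{1}}\left[1,b;b+1;1-u\right]$ is increasing on $u>1$ and that $u=2^{KQ/(W\tau)}$ is increasing in $K$ and decreasing in $\tau$. A cleaner alternative, which I would prefer, is to avoid the special function entirely: the equation $g=\frac{1-\epsilon}{q_\mathrm{hit}\epsilon}$ is equivalent to setting the per-link coverage probability derived in Appendix A (respectively Appendix E) equal to $\epsilon$, and that probability is transparently decreasing in the load $K$ and increasing in the effective resource $\tau$; the required orderings of $K_{\max}^{\rm a}$ and $K_{\max}^{\rm b}$ then follow without touching ${_2F_{1}}$. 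Everything else---translating the two hypotheses into $\tau$- and $q_{\mathrm{hit}}$-inequalities and bounding the residual tail sums by $\rho$---is routine.
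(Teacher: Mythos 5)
Your proof is correct and takes essentially the same route as the paper's Appendix F: both arguments reduce the corollary to ordering $K_{\max}^{\rm a}$ versus $K_{\max}^{\rm b}$ from \eqref{K_max_eq_11} and \eqref{K_b_num}, translate the two hypotheses into a comparison of the effective resources $\eta W T_\mathrm{th}$ and $\left(1-\eta\right)W\left(T_\mathrm{th}-T_1\right)$ together with $q_{\mathrm{hit}}\lessgtr\tfrac{1}{2}$, select the corresponding branch of \eqref{SCD_massiveMIMO_backhaul_access}, and use Corollaries 4--6 to fix the cache-size ranges and justify dropping the residual tail sum. The differences are purely in rigor, to your credit: you actually prove the monotonicity of the common master function that the paper merely asserts (``we see that $K_{\max}^{\rm a}\geq K_{\max}^{\rm b}$''), you bound the discarded tail explicitly by $\rho$ times the number of dropped terms, and you are right that the paper's upper endpoint $\left(\min\left\{\Xi_{\rm a}\frac{1-\epsilon}{\epsilon},1\right\}\right)^{\frac{1}{1-\varsigma}}$ in the second regime is missing a factor of $J$.
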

 The above corollary provides some important insights into the interplay between time-frequency resource allocation and cache size in cache-enabled dense cellular networks with massive MIMO backhaul, which plays a key role in the content delivery performance.

\subsection{Latency}
To evaluate the latency in such networks,  we consider the average delay for successfully obtaining the requested content in such networks. It should be noted that when the small cells are overloaded, UEs may suffer longer delay. There are many approaches to solve the overload issue such as  deploying enough small cells following the rule of \textbf{Corollary 2} and \textbf{Corollary 6} or advanced multi-antenna techniques. Moreover, it may be more lightly loaded in realistic small cell networks~\cite{Jeffrey_5G}.  For tractability, we assume that the load of a small cell will not exceed its maximum load $K_{\max}$. As suggested in \cite{J_Liu_2017}, the average delay for requesting a content from a typical small cell  can be expressed as
\begin{align}\label{average_delay_resource_allocation}
&\mathcal{D}=\sum\limits_{k=1}^{K_{\max}} \mathcal{P}_{{\frac{\lambda_\mathrm{U}}{\lambda_\mathrm{S}}}}\left(k\right) \bigg(q_\mathrm{hit} \frac{Q}{\mathbb{E}\left\{R_\mathrm{a}\right\}} \nonumber\\
&\qquad\qquad\qquad~~~~~~~+\left(1-q_\mathrm{hit}\right)
\Big(T_1+\frac{Q}{\mathbb{E}\left\{R_{\mathrm{a}^{'}}\right\}}\Big)\bigg),
\end{align}
where $T_1$ is the massive MIMO backhaul time detailed in Section III-B, and $\mathbb{E}\left\{R_\mathrm{a}\right\}$ and $\mathbb{E}\left\{R_{\mathrm{a}^{'}}\right\}$ are the average access rate of the cached  and self-backhauled content delivery, respectively, which are given by
\begin{align}\label{E_a_a_pi}
\left\{ \begin{array}{l}
\mathbb{E}\left\{R_\mathrm{a}\right\}=\int_{\rm{0}}^\infty  {\varphi \left( x,q_\mathrm{hit},\eta\right)}dx,
\\
\mathbb{E}\left\{R_{\mathrm{a}^{'}}\right\}=\int_{\rm{0}}^\infty  {\varphi \left( x,1-q_\mathrm{hit},1-\eta\right)} dx,
\end{array}\right.
\end{align}
where $\varphi \left(x,\theta_1,\theta_2\right)=\left(1+\theta_1\frac{2^{\frac{{k x}}{\theta_2 W }+1}-2}{\alpha_\mathrm{a}-2} \chi\left(k\right)\right)^{-1}$ with $\chi\left(k\right)={_2F_{1}}\left[1,1-\frac{2}{\alpha_\mathrm{a}};2-\frac{2}{\alpha_\mathrm{a}};
1-{2^{\frac{{k x}}{{\theta_2 W }}}} \right]$ is the complementary cumulative distribution function  of the $R_\mathrm{a}$ or $R_{\mathrm{a}^{'}}$, respectively,  which is obtained by using the approach in Appendix A.

Given the hit probability, i.e., the cache size is fixed, the spectrum fraction $\eta=\eta_o$ for meeting $\mathbb{E}\left\{R_\mathrm{a}\right\}=\mathbb{E}\left\{R_{\mathrm{a}^{'}}\right\}$ can be easily obtained by using one-dimension search, considering the fact that $\mathbb{E}\left\{R_\mathrm{a}\right\}-\mathbb{E}\left\{R_{\mathrm{a}^{'}}\right\}$ is an increasing function of $\eta$.
{\begin{corollary}
When $\eta<\eta_o$, the average delay of self-backhauled  content delivery could be lower than cached content delivery  if massive MIMO antennas meet
\begin{align}\label{delay_massiveMIMO_antennas}
N \geq  \left( {\frac{{{2^{\Theta \left( {{\eta _o}} \right)}} - 1}}{{{P_{\rm{b}}}\beta {e^{{{\overline \Delta }_1} - {{\overline \Delta }_2}}}}} + 1} \right){S_o} - \frac{1}{2}
\end{align}
with $\Theta \left( {{\eta _o}} \right) = \frac{{{{\left( {1 - {\eta _o}} \right)}^{ - 1}}{W^{ - 1}}\mathbb{E}\left\{ {{R_\mathrm{a}}} \right\}\mathbb{E}\left\{ {{R_{\mathrm{a}^{'}}}} \right\}}}{{\mathbb{E}\left\{ {{R_{{\mathrm{a}^{'}}}}} \right\} - \mathbb{E}\left\{ {{R_\mathrm{a}}} \right\}}}$, for a specified typical backhaul load $S_o$.
\end{corollary}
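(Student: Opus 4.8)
The plan is to read \eqref{delay_massiveMIMO_antennas} as a \emph{sufficient} condition for the per-delivery inequality
\[
T_1+\frac{Q}{\mathbb{E}\{R_{\mathrm{a}^{'}}\}}\;\le\;\frac{Q}{\mathbb{E}\{R_\mathrm{a}\}},
\]
i.e.\ that the self-backhauled term inside the delay sum \eqref{average_delay_resource_allocation} does not exceed the cached term. First I would isolate the available backhaul budget by rewriting this as $T_1\le Q\big(1/\mathbb{E}\{R_\mathrm{a}\}-1/\mathbb{E}\{R_{\mathrm{a}^{'}}\}\big)$. The right-hand side is strictly positive exactly because $\eta<\eta_o$: by definition $\eta_o$ solves $\mathbb{E}\{R_\mathrm{a}\}=\mathbb{E}\{R_{\mathrm{a}^{'}}\}$, and the difference $\mathbb{E}\{R_\mathrm{a}\}-\mathbb{E}\{R_{\mathrm{a}^{'}}\}$ was argued (just before the corollary) to be increasing in $\eta$, so $\eta<\eta_o$ forces $\mathbb{E}\{R_{\mathrm{a}^{'}}\}>\mathbb{E}\{R_\mathrm{a}\}$ and hence a positive budget for the backhaul time.

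Next I would eliminate $T_1$ in favour of $N$. Since $T_1=Q/\overline{R}_\mathrm{b}$ and \textbf{Corollary 3} supplies the tight \emph{lower} bound $\overline{R}_\mathrm{b}\ge\overline{R}_\mathrm{b}^{\mathrm{Low}}(S_o)$ of \eqref{coro_3_eq_lowerbound}, the time constraint \eqref{backhaul_time_constraint} gives $T_1\le Q/\overline{R}_\mathrm{b}^{\mathrm{Low}}(S_o)$. Substituting this estimate, a sufficient condition for the budget inequality is $Q/\overline{R}_\mathrm{b}^{\mathrm{Low}}(S_o)\le Q\big(1/\mathbb{E}\{R_\mathrm{a}\}-1/\mathbb{E}\{R_{\mathrm{a}^{'}}\}\big)$, which after cancelling $Q$ and inverting both sides becomes
\[
\overline{R}_\mathrm{b}^{\mathrm{Low}}(S_o)\;\ge\;\frac{\mathbb{E}\{R_\mathrm{a}\}\,\mathbb{E}\{R_{\mathrm{a}^{'}}\}}{\mathbb{E}\{R_{\mathrm{a}^{'}}\}-\mathbb{E}\{R_\mathrm{a}\}}.
\]

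I would then substitute the explicit form $\overline{R}_\mathrm{b}^{\mathrm{Low}}(S_o)=(1-\eta)W\log_2\big(1+P_\mathrm{b}\beta\frac{N-S_o+1/2}{S_o}e^{\overline{\Delta}_1-\overline{\Delta}_2}\big)$, divide through by $(1-\eta)W$, and recognize that the resulting right-hand side is precisely the threshold $\Theta$ with the operating value $\eta$ in its $(1-\eta)^{-1}$ prefactor (the averages being evaluated at that same $\eta$). Because $\eta<\eta_o$ gives $(1-\eta)^{-1}<(1-\eta_o)^{-1}$, replacing the prefactor by $(1-\eta_o)^{-1}$ only enlarges the threshold, so demanding $\log_2(\cdots)\ge\Theta(\eta_o)$ is a (conservative, still sufficient) strengthening. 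Finally, since the logarithm is strictly increasing in $N$, I would exponentiate base $2$, reduce to $P_\mathrm{b}\beta\frac{N-S_o+1/2}{S_o}e^{\overline{\Delta}_1-\overline{\Delta}_2}\ge 2^{\Theta(\eta_o)}-1$, and collect terms to reach exactly \eqref{delay_massiveMIMO_antennas}.

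The main obstacle is bookkeeping the directions of the three chained inequalities so the final requirement is genuinely sufficient: the \textbf{Corollary 3} estimate must enter as an \emph{upper} bound on $T_1$ (which relies on $\overline{R}_\mathrm{b}^{\mathrm{Low}}$ being a valid \emph{lower} bound on the true rate, as established in Appendix D), the budget must be certified positive via the $\eta<\eta_o$ monotonicity, and the prefactor replacement $(1-\eta)^{-1}\to(1-\eta_o)^{-1}$ must be in the enlarging direction. None of these steps is computationally heavy---once the signs are verified, inverting the monotone logarithm is routine. This sign analysis also accounts for the hedged wording ``could be lower'': the derived bound on $N$ is sufficient rather than necessary, because it discards the slack between $T_1$ and the upper estimate furnished by the rate lower bound.
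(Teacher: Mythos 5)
Your proposal is correct and follows essentially the same route as the paper, whose proof is the one-line remark that the result follows from requiring $T_1 \leq \frac{Q}{\mathbb{E}\left\{R_\mathrm{a}\right\}}-\frac{Q}{\mathbb{E}\left\{R_{\mathrm{a}^{'}}\right\}}$ (positive for $\eta<\eta_o$ by the stated monotonicity) together with the upper bound on $T_1$ from \textbf{Corollary 3}, and then inverting the logarithm to isolate $N$. You actually fill a gap the paper glosses over---the replacement of the operating prefactor $(1-\eta)^{-1}$ by $(1-\eta_o)^{-1}$ in $\Theta$, which you correctly justify as an enlargement of the threshold that preserves sufficiency---so your chain of inequalities is a faithful, more careful expansion of the paper's sketch.
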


The proof of \textbf{Corollary 8} can be easily obtained by considering $T_1 \leq \frac{Q}{\mathbb{E}\left\{R_\mathrm{a}\right\}}-\frac{Q}{\mathbb{E}\left\{R_{\mathrm{a}^{'}}\right\}}$ for $\eta<\eta_o$ and \textbf{Corollary 3}. It is implied from \textbf{Corollary 8} that for the case of requesting non-cached contents, the average delay of the non-cached content delivery  via massive MIMO backhaul could be comparable to that of the cached content delivery, if the average access rate of cached content delivery is lower than that of self-backhauled content delivery.}

\section{Simulation Results}

\begin{table*}[tb]
\centering
\caption{Simulation Parameters}\label{table1}
\begin{tabular}{ |p{6cm}|p{3.5cm}| p{4.2cm}|}
 \hline
 Parameter & Symbol & Value \\
 \hline
 Pathloss exponent to UE & $\alpha_\mathrm{a}$ & 3.0\\
  \hline
 Pathloss exponent to SBS & $\alpha_\mathrm{b}$ & 2.6 \\
  \hline
Transmit power of MBS & $P_\mathrm{b}$ & 46 dBm\\
 \hline
Transmit power of SBS & $P_\mathrm{a}$ & 30 dBm\\
  \hline
Carrier frequency   & $f_c$  & 3.5 GHz \\
  \hline
 Frequency dependent constant value & $\beta$ & $\left(\frac{{{3 \times 10^8}}}{{4\pi {f_c}}}\right)^2$\\
  \hline
 System bandwidth   & $W$  & 100 MHz \\
  \hline
Noise power & $\sigma_\mathrm{a}^2,\;\sigma_\mathrm{b}^2,\;\sigma_{\mathrm{a}^{'}}^2$ & $-174+10 \times \log_{10}(\mathrm{Bandwidth})$ dBm\\
 \hline
Content library size & $J$ & $10^5$ \\
\hline
Zipf exponent & $\varsigma$ & $0.7$\\
\hline
\end{tabular}
\end{table*}
In this section, simulation results are presented to validate the prior analysis and further shed light on the effects of key system parameters including cell load, cache size, BS density, and massive MIMO antennas on the performance. The
basic simulation parameters are shown in Table I.

\subsection{Cached Content Delivery}
In this subsection, we illustrate the cell load, SCD probability, and minimum required SBS density when the requested content is cached at the associated SBS.

\begin{figure}[htbp]
\centering
\includegraphics[width=3.5 in,height=2.8 in]{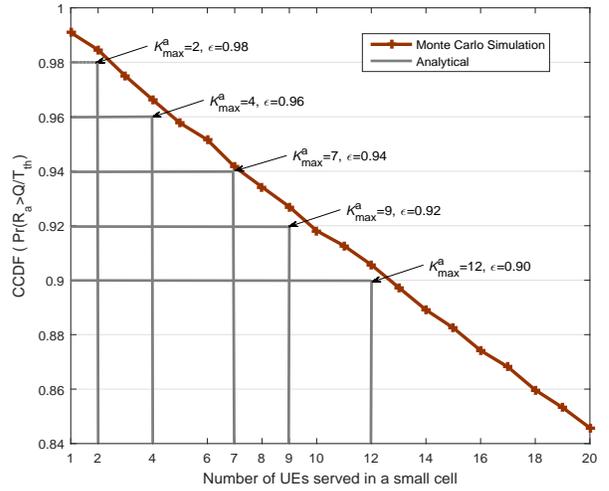}
\caption{The complementary cumulative distribution function (CCDF) of the $R_\mathrm{a}$: $\frac{Q}{T_{\mathrm{th}}}=1$ Mbps, $\lambda_\mathrm{U}=3\times 10^{-4}$ $\mathrm{m}^{-2}$, $\lambda_\mathrm{S}=10^{-4}$ $\mathrm{m}^{-2}$,  $\eta=0.5$, and Cache Size$=3\times 10^{3}$.}
\label{fig2}
\end{figure}
Fig. \ref{fig2} shows the complementary cumulative distribution function (CCDF) of the rate $R_\mathrm{a}$ for different number of UEs served in a small cell. The analytical maximum cell load $K_{\max}^{\mathrm{a}}$ for different CCDF thresholds are obtained from \eqref{K_max_eq_11}, which has a precise match with the Monte Carlo simulations. The CCDF is a decreasing function of number of UEs served in a small cell, since  resources allocated to each UE become less  when serving more UEs.

\begin{figure}[htbp]
\centering
\includegraphics[width=3.5 in,height=2.8 in]{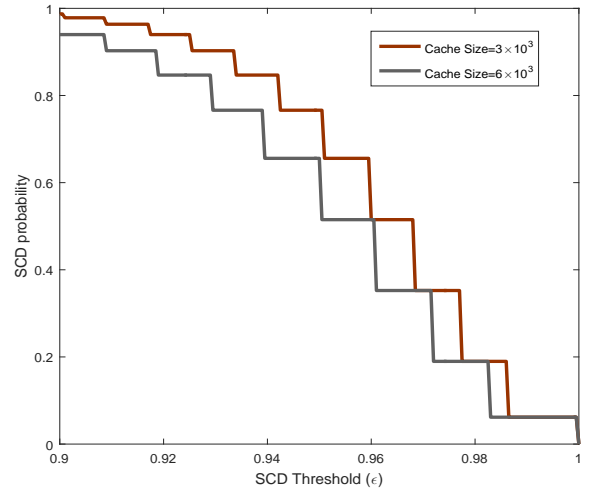}
\caption{The SCD probability: $\frac{Q}{T_{\mathrm{th}}}=1$ Mbps, $\lambda_\mathrm{U}=3\times 10^{-4}$ $\mathrm{m}^{-2}$, $\lambda_\mathrm{S}=10^{-4}$ $\mathrm{m}^{-2}$,  and $\eta=0.5$.}
\label{fig3}
\end{figure}

{ Fig. \ref{fig3} shows the SCD probability when the requested content is cached at the associated SBS, based on \textbf{Theorem 1} and Fig. \ref{fig2}. The stair-like curves are induced by the fact that the SCD probability given by \eqref{SCDP_cached} is a discrete function of maximum cell load.}  We see that for fixed cache size,  the SCD probability decreases when the system requires higher SCD threshold $\epsilon$, since higher $\epsilon$ reduces the level of maximum allowable cell load, as suggested in Fig. \ref{fig2}. Moreover, for a given $\epsilon$, the SCD probability decreases with increasing the cache size. The reason is that hit probability increases with increasing the cache size, i.e.,  UEs are more likely to obtain the requested contents cached by their associated SBSs, which results in more interference at the same frequency band and reduces the maximum allowable cell load.

\begin{figure}[htbp]
\centering
\includegraphics[width=3.5 in,height=2.8 in]{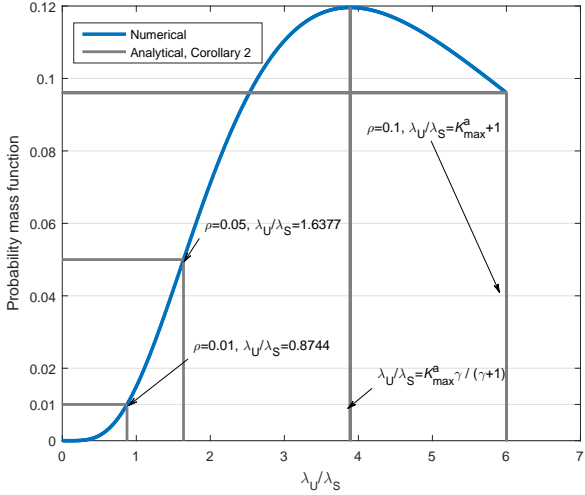}
\caption{The minimum required SBS density for avoiding overloading.}
\label{fig4}
\end{figure}

Fig. \ref{fig4} shows the  minimum required SBS density to avoid the overload issue given the UE density $\lambda_\mathrm{U}$. Without loss of generality, we assume that the maximum allowable load of a small cell is  $K_{\max}^\mathrm{a}=5$ in this figure (note that for specified system performance requirement, the maximum small cell load is obtained from \eqref{K_max_eq_11}, as illustrated in Fig. 2.).  The numerical result precisely matches with the analysis shown in \textbf{Corollary 2}. We see that when the probability that more than $K_{\max}^\mathrm{a}$ UEs need to be served in a small cell is not larger  than $\rho=0.1$, the minimum required SBS density satisfies $\frac{\lambda_\mathrm{U}}{\lambda_\mathrm{S}}=K_{\max}^\mathrm{a}+1=6$, as confirmed in \eqref{SBS_density_min_theo1}. When the system requires lower $\rho=0.1$ (i.e., lower overload probability.), the  density ratio $\frac{\lambda_\mathrm{U}}{\lambda_\mathrm{S}}$ in such networks decreases, which means that more SBSs need to be deployed.

\subsection{Massive MIMO Backhaul Transmission}
In this subsection, we focus on the massive MIMO backhaul achievable rate, which determines the amount of backhaul time when an SBS obtains the requested content from its associated MBS. Note that the macrocell load and minimum required MBS density have been studied in Section III-B,  which are similar to \textbf{Theorem 1} and \textbf{Corollary 2}, and numerical results can be easily obtained by following Figs. 2 and 4.
\begin{figure}[htbp]
\centering
\includegraphics[width=3.5 in,height=2.8 in]{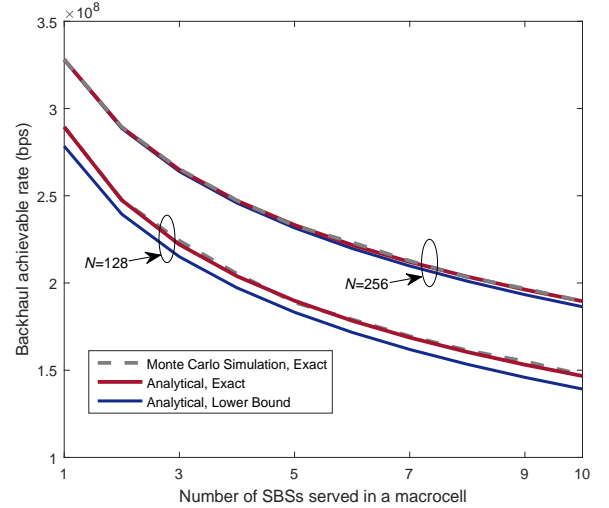}
\caption{Backhaul achievable rate: $\lambda_\mathrm{M}=10^{-5}$ $\mathrm{m}^{-2}$, $\eta=0.5$ and $r_\mathrm{b}=5$ m.}
\label{fig5}
\end{figure}

Fig. \ref{fig5} shows the backhaul achievable rate for different macrocell load and massive MIMO antennas. The analytical exact and lower-bound curves are obtained from \eqref{backhaul_tran_rate_exp} and \eqref{coro_3_eq_lowerbound}, respectively, which tightly matches with the simulated exact curves. We see that the backhaul achievable rate decreases when macrocell load increases, since each SBS will obtain less transmit power and array gains. Adding more massive MIMO antennas improves the achievable rate because of larger array gains.

\subsection{Latency}
In this subsection, we evaluate the average delay in two scenarios: 1)  The requested content is cached at the associated SBS;  and 2) the requested content is not cached and needs to be obtained via massive MIMO backhaul.

\begin{figure}[htbp]
\centering
\includegraphics[width=3.5 in,height=2.8 in]{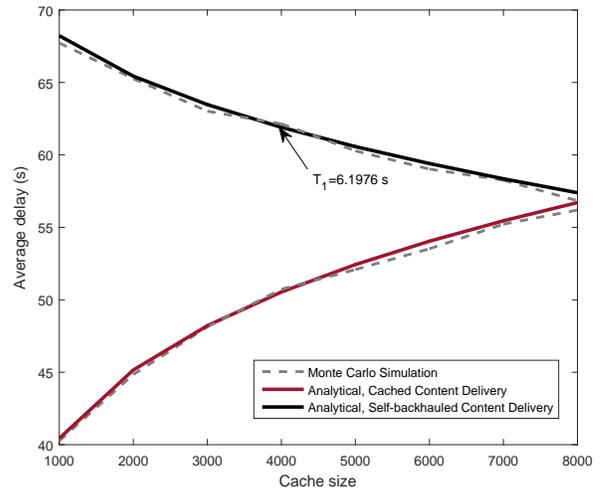}
\caption{Average delay: $Q=1$ Gbit, $\lambda_\mathrm{U}=3\times 10^{-4}$ $\mathrm{m}^{-2}$, $\lambda_\mathrm{S}=10^{-4}$ $\mathrm{m}^{-2}$,  $\lambda_\mathrm{M}=10^{-5}$ $\mathrm{m}^{-2}$, $N=128$, $S_o=10$, $K=5$, $\eta=0.45$, and $r_\mathrm{b}=5$ m.}
\label{fig6}
\end{figure}

Fig. \ref{fig6} shows the average delay for different cache size. The analytical curves are obtained based on the average rate given by \eqref{E_a_a_pi}. We see that the average delay for cached content delivery is lower than that of the self-backhauled content delivery. The average delay for cached content delivery increases with increasing the cache size. In contrast, the average delay for self-backhauled content delivery decreases with increasing the cache size. The reason is that larger cache size results in higher hit probability, and more SBSs can provide cached content delivery. This results in more inter-SBS interference over the frequency band allocated to the cached content delivery, and less inter-SBS interference over the frequency band allocated to the self-backhauled content delivery. In addition, the content delivery time of massive MIMO backhaul $T_1$ is much lower than that of the access.

\section{Conclusion}
{We have studied content delivery in cache-enabled HetNets with massive MIMO backhaul. In such networks, the successful content delivery probability involving cached content delivery and non-cached content delivery via massive MIMO backhaul was analyzed. The effects of hit probability, UE and SBS densities on the performance were addressed. Particularly, we provided the minimum required SBS and MBS densities for avoiding overloading. The derived results demonstrated that hit probability needs to be properly determined, in order to achieve successful content delivery. The interplay between cache size and time-frequency resource allocations was quantified from the perspective of successful content delivery probability. The latency was characterized in terms of average delay in this networks.  It was  proved that when UEs request non-cached contents, the average delay of the non-cached content delivery  could be comparable to that of the cached content delivery with the help of massive MIMO aided self-backhaul in some cases.}

\section*{Appendix A: Proof of {Theorem 1}}
\label{App:theo_1}
\renewcommand{\theequation}{A.\arabic{equation}}
\setcounter{equation}{0}

Based on  \eqref{access_Rate}, SCD probability is calculated as
\begin{align}\label{App_A_1}
{\Psi_{\mathrm{SCD}}^{\mathrm{a}}}\left( {Q,T_\mathrm{th}} \right)&=\Pr\left(R_{\rm{a}}\geq\frac{Q}{T_\mathrm{th}}\right)\nonumber\\
&=\mathbb{E}_K\Bigg\{\underbrace{\Pr\Big(R_{\rm{a}}\geq\frac{Q}{T_\mathrm{th}}|K=k\Big)}_{\Lambda\left(k\right)}\Bigg\} \nonumber\\
&=\sum\limits_{k =1} \mathcal{P}_{\frac{\lambda_\mathrm{U}}{\lambda_\mathrm{S}}}\left(k\right) \Lambda\left(k\right),
\end{align}
where $\mathcal{P}_{\frac{\lambda_\mathrm{U}}{\lambda_\mathrm{S}}}\left(k\right)$ is the probability mass function (PMF) of the number of other $k-1$ UEs (except typical UE) served by the typical SBS, and {$\Lambda\left(k\right)$ is the conditional SCD probability given $K=k$.} According to~\cite{Singh_May_2013}, $\mathcal{P}_{\frac{\lambda_\mathrm{U}}{\lambda_\mathrm{S}}}\left(k\right)$ can be calculated as
\begin{align}\label{App_A_2}
\mathcal{P}_{\frac{\lambda_\mathrm{U}}{\lambda_\mathrm{S}}}\left(k\right)&=\frac{\gamma^\gamma}{\left(k-1\right)!}\frac{\Gamma\left(k+\gamma\right)}{\Gamma\left(\gamma\right)}
\frac{\left(\frac{\lambda_\mathrm{U}}{\lambda_\mathrm{S} }\right)^{k-1}}{ \left(\gamma+\frac{\lambda_\mathrm{U}}{\lambda_\mathrm{S}}\right)^{k+\gamma}   },
\end{align}
where $\gamma=3.5$~\cite{J_S_Ferenc_2007}. Given $K=k$, $\Lambda\left(k\right)$ is calculated as
\begin{align}\label{App_A_3}
\Lambda\left(k\right)&=\Pr\left(R_{\rm{a}}\geq\frac{Q}{T_\mathrm{th}}\right) \nonumber\\
&=\mathbb{E}_{\left| {{X_o}} \right|}\left\{\Pr\left(\frac{{{P_{\mathrm{a}}}{h_o}L\left( {\left| {{X_o}} \right|} \right)}}{{I_{\mathrm{a}}+\sigma _{\mathrm{a}}^2}}{\rm{ \geq }}{2^{\frac{{kQ}}{{\eta W T_\mathrm{th}  }}}} - 1\right)\right\} \nonumber\\
&=\int_0^\infty \underbrace{\Pr\left(\frac{{{P_{\mathrm{a}}}{h_o}L\left( {x} \right)}}{{I_{\mathrm{a}}+\sigma _{\mathrm{a}}^2}}{\rm{ \geq }}{2^{\frac{{kQ}}{{\eta W T_\mathrm{th}  }}}} - 1\right)}_{\Upsilon_1\left(x\right)} f_{\left| {{X_o}} \right|}\left(x\right)   dx,
\end{align}
where $f_{\left| {{X_o}} \right|}\left(x\right)=2 \pi \lambda_\mathrm{S}  x \exp\left(-\pi \lambda_\mathrm{S} x^2\right)$ is the probability density function (PDF) of the distance between the typical UE and its associated SBS, and $\Upsilon_1\left(x\right)$ is the conditional SCD probability given $K=k$ and $\left| {{X_o}} \right|=x$. Considering the fact that dense cellular network is interference-limited in practice, the effect of noise power on the performance is negligible. As such, we can evaluate $\Upsilon_1\left(x\right)$ as
\begin{align}\label{Upsilon_1_cached}
&\Upsilon_1\left(x\right)=\mathbb{E}_{\Phi_\mathrm{S}^\mathrm{a}}\left\{\exp\left(-\frac{{2^{\frac{{kQ}}{{\eta W T_\mathrm{th}  }}}} - 1}{P_{\mathrm{a}} L\left(x\right)}I_{\mathrm{a}} \right)   \right\} \nonumber\\
&\mathop =\limits^{\left( a \right)}  \exp \left(-2 \pi \lambda_\mathrm{S} q_{\mathrm{hit}} \int_x^\infty \frac{\left({2^{\frac{{kQ}}{{\eta W T_\mathrm{th}  }}}} - 1\right)x^{\alpha_\mathrm{a}}r^{1-\alpha_\mathrm{a}}}{1+ \left({2^{\frac{{kQ}}{{\eta W T_\mathrm{th}  }}}} - 1\right)x^{\alpha_\mathrm{a}} r^{-\alpha_\mathrm{a}}}   dr  \right) \nonumber\\
&=\exp\bigg(-2 \pi \lambda_\mathrm{S} q_{\mathrm{hit}}\frac{x^{2}}{\alpha_\mathrm{a}-2} \left({2^{\frac{{kQ}}{{\eta W T_\mathrm{th}  }}}} - 1\right) \times \nonumber\\
&\qquad\quad  {_2F_{1}}\left[1,1-\frac{2}{\alpha_\mathrm{a}};2-\frac{2}{\alpha_\mathrm{a}};
1-{2^{\frac{{kQ}}{{\eta W T_\mathrm{th}  }}}} \right] \bigg),
\end{align}
where step (a) is obtained by using the generating functional of the PPP~\cite{Haenggi2009}. By substituting \eqref{Upsilon_1_cached} into \eqref{App_A_3},
$\Lambda\left(k\right)$ can be derived in closed-form  as
\begin{align}\label{Appendix_A_Lambda}
\Lambda\left(k\right)=\frac{1}{1+ q_{\mathrm{hit}}\frac{2^{\frac{{kQ}}{\eta W T_\mathrm{th}}+1}-2}{\alpha_\mathrm{a}-2} \chi_k^{\mathrm{a}}\left(k\right)},
\end{align}
where $\chi_k^{\mathrm{a}}\left(k\right)={_2F_{1}}\left[1,1-\frac{2}{\alpha_\mathrm{a}};2-\frac{2}{\alpha_\mathrm{a}};
1-{2^{\frac{{k Q}}{{\eta W T_\mathrm{th}  }}}} \right]$. Based on \eqref{Appendix_A_Lambda}, the maximum load $K_{\max}^{\rm{a}}$ of a typical small cell is given by
\begin{align}\label{k_SBS_load}
\left.\Lambda\left(k\right)\right|_{k=K_{\max}^{\rm{a}}}=\epsilon,
\end{align}
where $\epsilon$ is the threshold that SCD occurs when $\Lambda\left(k\right)\geq\epsilon$. {Although the closed-form solution with respect to (w.r.t.) $k={K_{\max }^{\rm{a}}}$ of \eqref{k_SBS_load} is unfeasible, it can be efficiently obtained by using one-dimension search as detailed in \textbf{Algorithm 1} due to the fact that $\Lambda\left(k\right)$ is a decreasing function of $k$.} The SCD probability in \eqref{App_A_1} is rewritten as
\begin{align}\label{SCD_SBS_App}
{\Psi_{\mathrm{SCD}}^{\mathrm{a}}}\left( {Q,T_\mathrm{th}} \right)&=\sum\limits_{k = 1}^{{K_{\max }^{\rm{a}}}} {\mathcal{P}_{\frac{\lambda_\mathrm{U}}{\lambda_\mathrm{S}}}\left(k\right)},
\end{align}
where ${\mathcal{P}_{\frac{\lambda_\mathrm{U}}{\lambda_\mathrm{S}}}\left(k\right)}$ and $K_{\max }^{\rm{a}}$  are defined by \eqref{App_A_2} and \eqref{k_SBS_load}, respectively, and the proof of \textbf{Theorem 1} is completed.

\section*{Appendix B: Proof of {Corollary 2}}
\label{App:theo_1}
\renewcommand{\theequation}{B.\arabic{equation}}
\setcounter{equation}{0}

After obtaining $K_{\max}^{\rm{a}}$, we can  find out how many small cells are sufficient to serve a specified scale of UEs $\lambda_\mathrm{U}$, since serving larger than $K_{\max}^{\rm{a}}$ UEs in a small cell cannot achieve SCD.  Assuming that $\mathcal{P}_{\frac{\lambda_\mathrm{U}}{\lambda_\mathrm{S}}}\left(K_{\max}^{\rm{a}}+1\right)=\rho$ with arbitrary small $\rho>0$, we need to guarantee  $\mathcal{P}_{\frac{\lambda_\mathrm{U}}{\lambda_\mathrm{S}}}\left(k\right) \leq \rho,\forall k>K_{\max}^{\rm{a}}$, in order to avoid content delivery failure resulting from overloading. Let
\begin{align}\label{k_k+1_ratio}
\frac{\mathcal{P}_{\frac{\lambda_\mathrm{U}}{\lambda_\mathrm{S}}}\left(k+1\right)}{\mathcal{P}_{\frac{\lambda_\mathrm{U}}{\lambda_\mathrm{S}}}\left(k\right)}=
\left(1+\frac{\gamma}{k}\right)\frac{\frac{\lambda_\mathrm{U}}{\lambda_\mathrm{S}}}{\gamma+\frac{\lambda_\mathrm{U}}{\lambda_\mathrm{S}}}
\leq 1, k \geq K_{\max}^{\rm{a}}+1.
\end{align}
We can intuitively interpret \eqref{k_k+1_ratio} based on the fact that given the maximum load $K_{\max}^{\rm{a}}$, the probability that serving more than $K_{\max}^{\rm{a}}$ UEs should be lower when adding more UEs. From \eqref{k_k+1_ratio}, we get $\frac{\lambda_\mathrm{U}}{\lambda_\mathrm{S}} \leq K_{\max}^{\rm{a}}+1$ such that $\mathcal{P}_{\frac{\lambda_\mathrm{U}}{\lambda_\mathrm{S}}}\left(k\right) \leq \rho,\forall k>K_{\max}^{\rm{a}}$. Then, we need to solve $\mathcal{P}_{\frac{\lambda_\mathrm{U}}{\lambda_\mathrm{S}}}\left(K_{\max}^{\rm{a}}+1\right)=\rho$ w.r.t. $\frac{\lambda_\mathrm{U}}{\lambda_\mathrm{S}}$ under the constraint $\frac{\lambda_\mathrm{U}}{\lambda_\mathrm{S}} \leq K_{\max}^{\rm{a}}+1$. The first-order partial derivative of $\mathcal{P}_{\frac{\lambda_\mathrm{U}}{\lambda_\mathrm{S}}}\left(k\right)$ w.r.t. $\frac{\lambda_\mathrm{U}}{\lambda_\mathrm{S}}$ is
\begin{align}\label{first_order_ratio}
\frac{{\partial {\mathcal{P}_{{\frac{\lambda_\mathrm{U}}{\lambda_\mathrm{S}}}}}}\left(k\right)}{\partial {\frac{\lambda_\mathrm{U}}{\lambda_\mathrm{S}}} }&=\frac{\gamma^\gamma \Gamma\left(k+\gamma\right)}{\left(k-1\right)!\Gamma\left(\gamma\right)}\left(\frac{\lambda_\mathrm{U}}{\lambda_\mathrm{S} }\right)^{k-2}\left(\gamma+\frac{\lambda_\mathrm{U}}{\lambda_\mathrm{S}}\right)^{-\left(k+\gamma+1\right)} \nonumber\\
&\quad \times \left(\left(k-1\right)\gamma-\left(\gamma+1\right)\frac{\lambda_\mathrm{U}}{\lambda_\mathrm{S}}\right).
\end{align}
{From \eqref{first_order_ratio}, we see that for $k=K_{\max}^{\rm{a}}+1$, $\frac{{\partial {\mathcal{P}_{{\frac{\lambda_\mathrm{U}}{\lambda_\mathrm{S}}}}}}}{\partial {\frac{\lambda_\mathrm{U}}{\lambda_\mathrm{S}}} } \geq 0$ as $\frac{\lambda_\mathrm{U}}{\lambda_\mathrm{S}}\in\Big(0,\frac{K_{\max}^{\rm{a}}\gamma}{\gamma+1}\Big] $, and $\frac{{\partial {\mathcal{P}_{{\frac{\lambda_\mathrm{U}}{\lambda_\mathrm{S}}}}}}}{\partial {\frac{\lambda_\mathrm{U}}{\lambda_\mathrm{S}}} }< 0$ as $\frac{\lambda_\mathrm{U}}{\lambda_\mathrm{S}}\in\Big(\frac{K_{\max}^{\rm{a}}\gamma}{\gamma+1}, K_{\max}^{\rm{a}}+1\Big]$. Therefore, the minimum required density of SBSs satisfies
\begin{align}\label{SBS_density_min}
\frac{\lambda_\mathrm{U}}{\lambda_\mathrm{S}}=\left\{ \begin{array}{l}
\left(K_{\max}^{\rm{a}}+1\right),\; if \;{\mathcal{P}_{\frac{\lambda_\mathrm{U}}{\lambda_\mathrm{S}}=K_{\max}^{\rm{a}}+1}\left(K_{\max}^{\rm{a}}+1\right)} \leq \rho,
\\
\mu_\mathrm{a},\; if \;{\mathcal{P}_{\frac{\lambda_\mathrm{U}}{\lambda_\mathrm{S}}=K_{\max}^{\rm{a}}+1}\left(K_{\max}^{\rm{a}}+1\right)}> \rho,
\end{array} \right.
\end{align}
where $\mu_\mathrm{a} \in \Big(0,\frac{K_{\max}^{\rm{a}}\gamma}{\gamma+1}\Big]$ is the solution of $\mathcal{P}_{\frac{\lambda_\mathrm{U}}{\lambda_\mathrm{S}}=\mu_\mathrm{a}}\left(K_{\max}^{\rm{a}}+1\right)=\rho$, and can be easily obtained by using one-dimension search approach, since $\mathcal{P}_{\frac{\lambda_\mathrm{U}}{\lambda_\mathrm{S}}=\mu_\mathrm{a}}\left(K_{\max}^{\rm{a}}+1\right)$ is an increasing function of $\mu_\mathrm{a}$ as $\mu_\mathrm{a} \in \Big(0,\frac{K_{\max}^{\rm{a}}\gamma}{\gamma+1}\Big]$.} Thus, we obtain the minimum required SBS density, in order to avoid overloading.

\section*{Appendix C: Detailed derivation of \eqref{backhaul_tran_rate_exp}}
\label{App:theo_1}
\renewcommand{\theequation}{C.\arabic{equation}}
\setcounter{equation}{0}
Since the typical SBS is associated with the nearest MBS, the PDF of the typical communication distance is 
\begin{align}\label{nearest_backhaul_dis}
f_{\left|Y_o\right|}\left(y\right)=\frac{2 \pi \lambda_\mathrm{M} y}{e^{-\pi\lambda_\mathrm{M} r_\mathrm{b}^2}} e^{-\pi \lambda_\mathrm{M} y^2},\;y \geq r_\mathrm{b},
\end{align}
where $r_\mathrm{b}$ is the minimum distance between the typical MBS and its associated SBS.
According to \eqref{rate_backhaul} and \cite{J_JOSE_2011,J_Hoydis_2013}, the achievable transmission rate  can be written as
\begin{align}\label{App_B_x_1}
&\overline{R}_\mathrm{b}={\left(1-\eta\right) W} \mathbb{E}_{\left|Y_o\right|}\left\{\log_2\Bigg(1+\frac{\frac{{{P_\mathrm{b}}}}{S_o} \Xi_1}{{\frac{{{P_\mathrm{b}}}}{S_o}\Xi_2+\Xi_3+\sigma_\mathrm{b}^2}}\Bigg)\right\} \nonumber\\
&={\left(1-\eta\right) W}  \int_{r_\mathrm{b}}^\infty C_\mathrm{b}\left(y\right) f_{\left|Y_o\right|}\left(y\right) dy,
\end{align}
where $C_\mathrm{b}\left(y\right)=\log_2\Bigg(1+\frac{\frac{{{P_\mathrm{b}}}}{S_o} \Xi_1\left(y\right)}{{\frac{{{P_\mathrm{b}}}}{S_o}\Xi_2\left(y\right)+\Xi_3\left(y\right)+\sigma_\mathrm{b}^2}}\Bigg)$ with $\Xi_1(y)=L\left( y \right)\left(\mathbb{E}\left\{\sqrt{{g_o}  }\right\}\right)^2$, $\Xi_2\left(y\right)=L\left( y \right)\mathrm{var}\left\{\sqrt{{g_o}}\right\}$,\footnote{$\mathrm{var}\left\{\cdot\right\}$ is the variance operator.}  and $\Xi_3\left(y\right)=\mathbb{E}_{\left|Y_o\right|=y}\left\{I_\mathrm{b}\right\}$.

 We first calculate $\Xi_1$ as
\begin{align}\label{Xi_1_App_B}
\Xi_1\left(y\right)&=L\left( y \right) \left(\int_{\rm{0}}^\infty  {\sqrt x } \frac{{{x^{N - S_o}}{e^{ - x}}}}{{\Gamma \left( {N - S_o+ 1} \right)}}dx\right)^2  \nonumber\\
&=L\left( y \right)\left(\frac{ \Gamma \left( {N - S_o+ \frac{3}{2}} \right)}{\Gamma \left( {N - S_o+ 1} \right)}\right)^2.
\end{align}
Then, $\Xi_2$ is given by
\begin{align}\label{Xi_2_App_B}
\Xi_2\left(y\right)&=L\left( y \right)  \mathbb{E}\left\{g_o \right\}-\Xi_1 =(N-S+1)L\left( y \right)-\Xi_1.
\end{align}
By using the Campbell's theorem~\cite{Baccelli2009}, $\Xi_3$ is obtained as
\begin{align}\label{Xi_3_app_B}
\Xi_3\left(y\right)&=\frac{P_\mathrm{b}}{S_j}\mathbb{E}\left\{g_j\right\} 2 \pi \lambda_\mathrm{M} \beta  \int_y^\infty t^{1-\alpha_\mathrm{b}} dt \nonumber\\
&=P_\mathrm{b} 2 \pi \lambda_\mathrm{M} \beta \frac{y^{2-\alpha_\mathrm{b}} }{\alpha_\mathrm{b}-2}.
\end{align}
By substituting \eqref{Xi_1_App_B}, \eqref{Xi_2_App_B} and \eqref{Xi_3_app_B} into \eqref{App_B_x_1}, we obtain \eqref{backhaul_tran_rate_exp}.

\section*{Appendix D: Proof of {Corollary 3}}
\label{App:theo_1}
\renewcommand{\theequation}{D.\arabic{equation}}
\setcounter{equation}{0}

According to the Stirling's formula, i.e., $\Gamma\left(x+1\right)\approx \left(\frac{x}{e}\right)^x\sqrt{2\pi x}$ as $x\rightarrow \infty$, we have
\begin{align}\label{asym_N_Xi_App}
{\Xi}_1\left(y\right) & \approx L\left( y \right) \left( \frac{\left(\frac{N-S_o+\frac{1}{2}}{e}\right)^{N-S_o+\frac{1}{2}} \sqrt{2\pi\left(N-S_o+\frac{1}{2}\right)}}{ \left(\frac{N-S}{e}\right)^{N-S_o} \sqrt{2\pi \left(N-S_o\right)}}\right)^2 \nonumber\\
&\approx L\left( y \right)\frac{N-S_o+\frac{1}{2}}{e} \left(1+\frac{1}{2\left(N-S_o\right)}\right)^{2\left(N-S_o\right)} \nonumber\\
&\mathop  \approx \limits^{\left( a \right)}  \left(N-S_o+\frac{1}{2}\right) L\left( y \right),
\end{align}
when the number of antennas at the MBS grows large. Note that step $\left(a\right)$ is obtained by the fact that $\left(1+\frac{1}{x}\right)^x \approx e$ as $x\rightarrow \infty$. Thus, ${\Xi}_2\left(y\right)=\frac{L\left( y \right)}{2}$. By using Jensen's inequality~\cite{Lifeng_Massive_MIMO}, we derive a tight lower-bound
on the achievable transmission rate \eqref{App_B_x_1} as
\begin{align}\label{lower_bound_backH}
&\overline{R}_\mathrm{b}^{\mathrm{Low}}={\left(1-\eta\right) W} \log_2\left(1+\frac{P_\mathrm{b}}{S_o}e^{\Delta_1-\Delta_2} \right),
\end{align}
where
 \begin{equation}\label{SINR_part2_mmwave}
 \left\{\begin{aligned}
 & \Delta_1=\mathbb{E}_{\left|Y_o\right|}\left\{\ln\Xi_1\right\}, \\
 & \Delta_2=\mathbb{E}_{\left|Y_o\right|}\left\{\ln\left(\frac{{P_\mathrm{b}}}{S_o}\Xi_2+\Xi_3+\sigma_\mathrm{b}^2\right) \right\}. \end{aligned}\right.
 \end{equation}
For large $N$, based on \eqref{asym_N_Xi_App}, $\Delta_1$ can be asymptotically derived as
\begin{align}\label{Delta_11}
&\Delta_1\approx \ln\left(N-S_o+\frac{1}{2}\right)+\mathbb{E}\left\{\ln L\left(y\right)\right\} \nonumber\\
&=\ln\left(N-S_o+\frac{1}{2}\right) +\ln\left(\beta\right)\nonumber\\
&\qquad\underbrace{-\frac{\alpha_\mathrm{b}}{e^{-\pi\lambda_\mathrm{M} r_\mathrm{b}^2}}\Big(-\frac{Ei\big(-r_\mathrm{b}^2 \pi \lambda_\mathrm{M} \big)}{2}+e^{-r_\mathrm{b}^2 \pi \lambda_\mathrm{M}}\ln r_\mathrm{b} \Big)}_{\overline{\Delta}_1},
\end{align}
where $Ei\left(z\right)$ is the exponential integral given by $Ei\left(z\right)=-\int_{-z}^\infty \frac{e^{-t}}{t} dt$. Then, $\Delta_2$ can be asymptotically calculated as
\begin{align}\label{Delta_2_asympt}
\Delta_2&= \int_{r_\mathrm{b}}^\infty \ln\left(\frac{{P_\mathrm{b}}}{S_o}\Xi_2\left(y\right)+\Xi_3\left(y\right)+\sigma_\mathrm{b}^2\right)f_{\left|Y_o\right|}\left(y\right) dy            \nonumber\\
&\approx \int_{r_\mathrm{b}}^\infty \ln\left(\frac{P_\mathrm{b}\beta}{2S_o}y^{-r_\mathrm{b}}+P_\mathrm{b} 2 \pi \lambda_\mathrm{M} \beta \frac{y^{2-\alpha_\mathrm{b}} }{\alpha_\mathrm{b}-2}+\sigma_\mathrm{b}^2\right) \nonumber\\
&\;\;\;\underbrace{\qquad\;\; \times \frac{2 \pi \lambda_\mathrm{M} y}{e^{-\pi\lambda_\mathrm{M} r_\mathrm{b}^2}} e^{-\pi \lambda_\mathrm{M} y^2} dy}_{\overline{\Delta}_2}.
\end{align}
{Substituting \eqref{Delta_11} and \eqref{Delta_2_asympt} into \eqref{lower_bound_backH}, we obtain \eqref{coro_3_eq_lowerbound}. }

Considering the fact that $T_1=\frac{Q}{\overline{R}_\mathrm{b}} \leq \frac{Q}{\overline{R}_\mathrm{b}^{\mathrm{Low}}}$, we obtain $T_1 \leq \frac{Q}{\left(1-\eta\right) W}\left(\log_2\left(1+\frac{P_\mathrm{b}\beta\left(N-S_o+\frac{1}{2}\right)}{S_o}e^{\overline{\Delta}_1-\overline{\Delta}_2} \right)\right)^{-1}$, which confirms the \textbf{Corollary 3}.

\section*{Appendix E: Proof of Theorem 2}
\label{App:theo_2}
\renewcommand{\theequation}{E.\arabic{equation}}
\setcounter{equation}{0}

Based on \eqref{AL_backhaul}, SCD probability is given by
\begin{align}\label{App_C_1}
{\Psi_{\mathrm{SCD}}^\mathrm{b}}\left( {Q,T_\mathrm{th}} \right)&=\Pr\left(R_{\mathrm{a}^{'}}>\frac{Q}{T_\mathrm{th}-T_1}\right)\nonumber\\
&=\sum\limits_{k \ge 1} \mathcal{P}_{\frac{\lambda_\mathrm{U}}{\lambda_\mathrm{S}}}\left(k\right) \Lambda_k^{\mathrm{b}},
\end{align}
where $\mathcal{P}_{\frac{\lambda_\mathrm{U}}{\lambda_\mathrm{S}}}\left(k\right)$ is given by \eqref{App_A_2}, and $\Lambda_k^{\mathrm{b}}$ is the conditional SCD probability given $K=k$. Similar to \eqref{App_A_3}, $\Lambda_k^{\mathrm{b}}$ is calculated as
\begin{align}\label{SCD_lambda_k_backhaul}
&\Lambda_k^{\mathrm{b}}=\Pr\left(\frac{{{P_{\mathrm{a}^{'}}}{h_o}L\left( {\left| {{X_o}} \right|} \right)}}{{I_{\mathrm{a}^{'}}+
\sigma _{\mathrm{a}^{'}}^2}}{\rm{ > }}{2^{\frac{{kQ}}{{\left(1-\eta\right) W \left(T_\mathrm{th}-T_1\right)  }}}} - 1\right) \nonumber\\
&=\frac{1}{1+ \left(1-q_{\mathrm{hit}}\right)\frac{2^{\frac{{kQ}}{{\left(1-\eta\right) W \left(T_\mathrm{th}-T_1\right)  }}+1}-2}{\alpha_\mathrm{a}-2}\chi_k^{\mathrm{b}}},
\end{align}
where $\chi_k^{\mathrm{b}}={_2F_{1}}\left[1,1-\frac{2}{\alpha_\mathrm{a}};2-\frac{2}{\alpha_\mathrm{a}};
1-{2^{\frac{{kQ}}{{\left(1-\eta\right) W \left(T_\mathrm{th}-T_1\right)  }}}} \right]$. Like \eqref{k_SBS_load}, the maximum load $K_{\max}^{\rm{b}}$ of a typical small cell is the solution of $\left.\Lambda\left(k\right)\right|_{k=K_{\max}^{\rm{b}}}=\epsilon$. Then, the SCD probability is obtained as \eqref{SCDP_backhaul_pro}.

\section*{Appendix F: Proof of Corollary 7}
\label{App:theo_2}
\renewcommand{\theequation}{F.\arabic{equation}}
\setcounter{equation}{0}
{Based on \eqref{K_max_eq_11} and \eqref{K_b_num}, we see that $K_{\max}^{\rm{a}} \geq K_{\max}^{\rm{b}}$ if $\frac{{{T_{\mathrm{th}}} - {T_1}}}{{{T_{\mathrm{th}}}}} \leq \frac{\eta }{ {1 - \eta } }$ and $q_\mathrm{hit} \leq \frac{1}{2}$. In this case, UE's requested contents are more likely to be delivered via massive MIMO self-backhaul. As such, based on \eqref{SCD_massiveMIMO_backhaul_access},  the SCD probability can be approximated as ${\Psi_{\mathrm{SCD}}}\left( {Q,T_\mathrm{th}} \right) \approx \sum\limits_{k=1}^{K_{\max}^{\rm{b}}} \mathcal{P}_{{\frac{\lambda_\mathrm{U}}{\lambda_\mathrm{S}}}}\left(k\right)$ given in \eqref{cache_size_factor}. Considering the fact that $q_{\mathrm{hit}}= \left(\frac{L}{J}\right)^{1-\varsigma}\geq \left[1- \Xi_{\rm{b}}\frac{1-\epsilon}{\epsilon}\right]^{+}$ in \textbf{Corollary 5} and $q_\mathrm{hit} =\left(\frac{L}{J}\right)^{1-\varsigma}\leq \frac{1}{2}$, the corresponding cache size is obtained as $L\in \left[J\left(\left[1- \Xi_{\rm{b}}\frac{1-\epsilon}{\epsilon}\right]^{+} \right)^{\frac{1}{1-\varsigma}},J\left(\frac{1}{2}\right)^{\frac{1}{1-\varsigma}}\right]$. Moreover, increasing the cache size boosts the hit probability, and thus enhances the maximum cell load $K_{\max}^{\rm{b}}$ for self-backhauled content delivery. The reason is that more SBSs can deliver the cached contents, which reduces inter-cell interference in self-backhauled content delivery.

Likewise, $K_{\max}^{\rm{a}} <  K_{\max}^{\rm{b}}$ if $\frac{{{T_{\mathrm{th}}} - {T_1}}}{{{T_{\mathrm{th}}}}} > \frac{\eta }{ {1 - \eta } }$ and $q_\mathrm{hit} > \frac{1}{2}$, and we can obtain \eqref{cache_size_factor_cp} and the corresponding cache size $L\in \left[J\left(\frac{1}{2}\right)^{\frac{1}{1-\varsigma}},\left(\min\left\{ \Xi_{\rm{a}} \frac{1-\epsilon}{\epsilon},1\right\}\right)^{\frac{1}{1-\varsigma}}\right]$. }

\bibliographystyle{IEEEtran}

\end{document}